\newenvironment{NB}{
\color{red}{\bf NB}. \footnotesize
}{}
\renewcommand{\thesubsection}{\thesection(\@roman\c@subsection)}
\newenvironment{aenume}{%
  \begin{enumerate}%
  }{\end{enumerate}}
\newcounter{number}
\newtheorem{Lemma}[equation]{Lemma}
\newtheorem{Proposition}[equation]{Proposition}
\theoremstyle{definition}
\theoremstyle{remark}
\newtheorem{Remark}[equation]{Remark}
\newtheorem{Question}[equation]{Question}
\numberwithin{equation}{section}
\newcommand{\propref}[1]{Proposition~\ref{#1}}
\newcommand{\subsecref}[1]{\S\ref{#1}}
\newcommand{\defeq}{\overset{\operatorname{\scriptstyle def.}}{=}}
\newcommand{\CC}{{\mathbb C}}
\newcommand{\ZZ}{{\mathbb Z}}
\newcommand{\RR}{{\mathbb R}}
\newcommand{\proj}{{\mathbb P}}
\newcommand{\SL}{\operatorname{\rm SL}}
\newcommand{\SU}{\operatorname{\rm SU}}
\newcommand{\GL}{\operatorname{GL}}
\newcommand{\PGL}{\operatorname{PGL}}
\newcommand{\U}{\operatorname{\rm U}}
\newcommand{\grpSp}{\operatorname{\rm Sp}}
\newcommand{\gl}{\operatorname{\mathfrak{gl}}}
\newcommand{\g}{{\mathfrak g}}
\newcommand{\Ker}{\operatorname{Ker}}
\newcommand{\Ima}{\operatorname{Im}}
\newcommand{\DB}{\overline{\partial}}
\newcommand{\id}{\operatorname{id}}
\renewcommand{\MR}[1]{}
\newcommand{\dslash}{/\!\!/}
\newcommand{\bM}{\mathbf M}
\newcommand{\bN}{\mathbf N}
\newcommand{\CS}{\mathrm{CS}}
\newcommand{\tslash}{/\!\!/\!\!/}
\newcommand{\tslabar}{\mathbin{
\setbox0=\hbox{/\!\!/\!\!/}\rule[0.4\ht0]{\wd0}{.3\dp0}\kern-\wd0\box0}}
\newcommand{\Hyp}{\operatorname{Hyp}}
\newcommand{\bmu}{\boldsymbol\mu}
\newcommand{\la}{\lambda}
\newcommand{\aff}{\mathrm{aff}}
\newcommand{\cF}{\mathcal F}
\newcommand{\cA}[1][{}]{%
  \@ifmtarg{#1}%
  {\mathcal A}% if #1 is empty
  {\mathcal A(#1)}% if #1 is not empty
}
\newcommand{\cAh}[1][{}]{%
  \@ifmtarg{#1}%
  {\mathcal A_\hbar}% if #1 is empty
  {\mathcal A_\hbar(#1)}% if #1 is not empty
}
\newcommand{\Stab}{\operatorname{Stab}}
\begin{document}

\title[Questions on Coulomb branches of $3d$ $\mathcal N=4$ gauge theories] 
{Questions on provisional %mathematically defined 
Coulomb branches of $3$-dimensional $\mathcal N=4$ gauge theories
%{\rm Preliminary Version (\today)}
}
% \author[A.~Braverman]{Alexander Braverman}
% \address{Department of Mathematics, Brown University, 151 Thayer st,
% Providence RI 02912, USA}
% \email{braval@math.brown.edu}
% \author[M.~Finkelberg]{Michael Finkelberg}
% \address{National Research University Higher School of Economics,
% Department of Mathematics, 20 Myasnitskaya st, Moscow 101000 Russia; IITP}
% \email{fnklberg@gmail.com}
\author[H.~Nakajima]{Hiraku Nakajima}
\address{Research Institute for Mathematical Sciences,
Kyoto University, Kyoto 606-8502,
Japan}
\email{nakajima@kurims.kyoto-u.ac.jp}

\pagenumbering{roman}
\pagestyle{plain}

\subjclass[2000]{}
\begin{abstract}
    This is a supplement to \cite{2015arXiv150303676N}, where an
    approach towards a mathematically rigorous definition of the
    Coulomb branch of a $3$-dimensional $\mathcal N=4$ SUSY gauge
    theory was proposed. We ask questions on their expected
    properties, especially in relation to the corresponding Higgs
    branch, partly motivated by the interpretation of the level rank
    duality in terms of quiver varieties \cite{Na-branching} and the
    symplectic duality \cite{2014arXiv1407.0964B}. We study questions
    in a few examples.
\end{abstract}

\maketitle

\setcounter{tocdepth}{2}
%\tableofcontents

\section*{Introduction}

In \cite{2015arXiv150303676N}, we proposed an approach towards a
mathematically rigorous definition of the Coulomb branch $\mathcal
M_C$ of a $3$-dimensional $\mathcal N=4$ SUSY gauge theory.
Moreover various physically known examples and expected properties
were reviewed.
In this paper, we add questions on expected properties, especically in
relation to the corresponding Higgs branch $\mathcal M_H$. Some are
probably implicit in the physics literature, but we are motivated by
(a) the interpretation of the level rank duality of affine Lie
algebras of type $A$ via quiver varieties, and also (b) the symplectic
duality.

Recall when $\mathcal M_H$ is a quiver variety of affine type $A$, the
corresponding $\mathcal M_C$ is also a quiver variety of affine type
$A$ \cite{MR1454291,MR1454292}. Under their relation to representation
theory \cite{Na-quiver}, the pair is given by I.~Frenkel's level rank
duality \cite{MR675108}. Then one can interpret representation
theoretic statements as relation between $\mathcal M_H$ and $\mathcal
M_C$. This idea was a source of inspiration in author's work
\cite{Na-branching} on provisional double affine Grassmannian
\cite{braverman-2007}, and the joint work \cite{2014arXiv1406.2381B}.

Recall the symplectic duality \cite{2014arXiv1407.0964B} predicts
pairs of symplectic manifolds $(\mathfrak M, \mathfrak M^!)$ whose
categories $\mathcal O$, $\mathcal O^!$ are Koszul dual to each
other. Here the categories $\mathcal O$, $\mathcal O^!$ are certain
full subcategories of modules of quantizations of $\mathfrak M$,
$\mathfrak M^!$.
Many examples of symplectic dual pairs appear as $(\mathfrak M,
\mathfrak M') = (\mathcal M_H, \mathcal M_C)$ for a gauge theory,
e.g., the above quiver varieties of affine type $A$. If $\mathcal O$
and $\mathcal O^!$ are Koszul dual, one can deduce many relations
between $\mathfrak M$ and $\mathfrak M^!$. Therefore it is natural to
ask relations between $\mathcal M_H$ and $\mathcal M_C$. Note also
that the Koszul duality is expected to be closely related to the level
rank duality.

Note that $(\mathcal M_H, \mathcal M_C)$ are much more general: for
example they may not have resolution of singularities nor a torus
action with isolated fixed points, as required in the formulation of
the symplectic duality. They have always $\CC^\times$-action which
scale the symplectic form by weight $2$, but are not cone in general.
Even more fundamentally, we have lots of examples where $\mathcal M_H$
is a point, while $\mathcal M_C$ are nontrivial.
Therefore we decide to restrict ourselves to ask \emph{basic} (or
\emph{naive}) questions which can be asked without introducing
resolutions, torus action.
We study these questions in a few examples.

\subsection*{Notation}

\begin{enumerate}
      \item We basically follow the notation in Part I
    \cite{2015arXiv150303676N}. However we mainly use a complex
    reductive group instead of its maximal compact subgroup. Therefore
    we denote a reductive group by $G$, and its maximal compact by
    $G_c$. We only use the complex part of the hyper-K\"ahler moment
    map, for which we use the notation $\bmu$.

      \item We also change the notation for a compact Riemann surface
    from $C$ to $\Sigma$ to avoid a conflict with `C' for the Coulomb
    branch.
\end{enumerate}

\section{More general target spaces}

This section will be independent of other parts of the paper. The
reader can skip it, but can be also considered as a brief review of
the construction in \cite{2015arXiv150303676N} and its generalization.

In \cite{2015arXiv150303676N}, we define a sheaf of a vanishing cycle
on a moduli space associated with a complex symplectic representation
$\bM$ of a reductive group $G$. We consider their modification and
generalization.

This construction, as well as the previous one in
\cite{2015arXiv150303676N}, should be understood in the framework of
shifted symplectic structures \cite{MR3090262}.\footnote{The author
  thanks Dominic Joyce for pointing out a relevance of
  \cite{MR3090262} in our construction.} Because of the author's lack
of ability, we cannot make it precise unfortunately.

\subsection{$\sigma$-models}

Let us first consider
\begin{itemize}
      \item $(\bM,\omega)$ is a (holomorphic) symplectic manifold with
a $\CC^\times$-action such that $t^*\omega = t^2\omega$ for $t\in\CC^\times$.
  \item there is a Liouville form $\theta$ such that $d\theta = \omega$ and $t^*\theta = t^2\theta$.
\end{itemize}

Let $\Sigma$ be a compact Riemann surface. We choose and fix a square
root $K_\Sigma^{1/2}$ of the canonical bundle $K_\Sigma$. We consider
the associated $\CC^\times$-principal bundle $P_{K_\Sigma^{1/2}}$,
that is $P_{K_\Sigma^{1/2}}\times_{\CC^\times}\CC \cong
K_\Sigma^{1/2}$ when $\CC^\times$ acts on $\CC$ with weight $1$.
\begin{NB}
    Therefore a $\CC^\times$-equivariant map $s\colon
    P_{K_\Sigma^{1/2}}\to \CC$ ($s(p\cdot t) = t^{-1} s(p)$) is a
    section of $P_{K_\Sigma^{1/2}}\times_{\CC^\times}\CC$.
\end{NB}%

%Let us first consider the case $G=\{e\}$ for brevity. 

We consider a $\CC^\times$-equivariant $C^\infty$-map $\Phi\colon
P_{K_\Sigma^{1/2}}\to \bM$, in other words a $C^\infty$-section of a
bundle $P_{K_\Sigma^{1/2}}\times_{\CC^\times}\bM$.
\begin{NB}
    Let $p\in P_{K_\Sigma^{1/2}}$. Then $\Phi(p)\in\bM$. We then
    define $\Phi^\sim(x) = [p, \Phi(p)]$ for $x\in \Sigma$ such that
    $p$ is in the fiber over $x$. This is well-defined, as $[p\cdot g,
    \Phi(p\cdot g)] = [p\cdot g, g^{-1}\cdot \Phi(p)]
    = [p, \Phi(p)]$. Therefore $\Phi^\sim$ defines a section of
    $P_{K_\Sigma^{1/2}}\times_{\CC^\times}\bM$.
\end{NB}%
Taking a local holomorphic trivialization of $P_{K_\Sigma^{1/2}}$, we
see that the $(0,1)$-part of $\Phi^*\theta$ is a well-defined
$K_\Sigma$-valued $(0,1)$-form, i.e., $(1,1)$-form on $\Sigma$: let
$\{ U_\alpha \}$ be an open cover of $\Sigma$, such that
$P_{K_\Sigma^{1/2}}$ is trivialized over $U_\alpha$. We denote the
transition function by $g_{\alpha\beta}$. Then $\Phi$ is a collection
$\{ \Phi_\alpha\colon U_\alpha\to \bM\}$ such that $\Phi_\alpha =
g_{\alpha\beta}\cdot\Phi_\beta$ on $U_\alpha\cap U_\beta$, where
$\cdot$ is the $\CC^\times$-action on $\bM$. We have $\DB\Phi_\alpha =
g_{\alpha\beta}\cdot \DB\Phi_\beta$ as $g_{\alpha\beta}$ is
holomorphic. Therefore $(\Phi_\alpha^*\theta)^{(0,1)} = \langle\theta,\DB\Phi_\alpha\rangle =
\langle\theta, g_{\alpha\beta}\cdot \DB\Phi_\beta\rangle =
g_{\alpha\beta}^2 \langle\theta, \DB\Phi_\beta\rangle$. This means
that $\{ \langle\theta,\DB\Phi_\alpha\rangle \}$ is a
$K_\Sigma$-valued $(0,1)$-form. This is $(\Phi^*\theta)^{(0,1)}$.

We integrate it over $\Sigma$:
\begin{equation*}
    \CS(\Phi) \defeq \int_\Sigma (\Phi^*\theta)^{(0,1)}.
\end{equation*}
This is the holomorphic Chern-Simons type integral in this setting.
Moreover $\Phi$ is a critical point of $\CS$ if and only if $\Phi$ is
a holomorphic section of $P_{K_\Sigma^{1/2}}\times_{\CC^\times}\bM$,
i.e., a twisted holomorphic map from $\Sigma$ to $\bM$.

\begin{NB}
    Let us give an intrinsic description. Consider $\Phi$ as a
    $\CC^\times$-equivariant $C^\infty$-map $P_{K_\Sigma^{1/2}}\to
    \bM$. Then $\Phi^*\theta$ is a $C^\infty$ 1-form on
    $P_{K_\Sigma^{1/2}}$. By the equivariance of $\Phi$, it is of
    weight $2$ with respect to the $\CC^\times$-action on
    $P_{K_\Sigma^{1/2}}$.
    Let $1^*_{\CC^\times,K_\Sigma^{1/2}}$ (resp.\
    $1^*_{\CC^\times,\bM}$) be the vector field over
    $P_{K_\Sigma^{1/2}}$ (resp.\@ $\bM$) generated by
    $1\in\operatorname{Lie}\CC^\times$. By the equivariance of $\Phi$,
    $d\Phi(1^*_{\CC^\times,K_\Sigma^{1/2}}) = -1^*_{\CC^\times,\bM}$.
    By our assumption $\theta(1^*_{\CC^\times,\bM})$ is a holomorphic
    function on $\bM$ such that $t^*(\theta(1^*_{\CC^\times,\bM})) =
    t^2 \theta(1^*_{\CC^\times,\bM})$.
    We combine two as $\Phi^*\theta + \Phi^*
    \theta(1^*_{\CC^\times,\bM}) \Omega$, where $\Omega$ is a
    connection on $P_{K_\Sigma^{1/2}}$ whose $(0,1)$-part is the
    holomorphic structure on $K_\Sigma^{1/2}$.

    Then it vanishes on $1^*_{\CC^\times,K_\Sigma^{1/2}}$ and of weight
    $2$. Therefore it is a $K_\Sigma$-valued $1$-form on
    $\Sigma$. Taking $(0,1)$-part
    \(
    \left(\Phi^*\theta + \Phi^*
    \theta(1^*_{\CC^\times,\bM}) \Omega\right)^{(0,1)},
    \)
    and identifying $K_\Sigma\cong\Lambda^{1,0}$, we regard it as a
    $(1,1)$-form. It is independent of the choice of the connection
    $\Omega$ as a different choice is $\Omega + \pi^*\omega$ for a
    $(1,0)$-form $\omega$ on $\Sigma$. Now it is clear that
    \(
        \left(\Phi^*\theta + \Phi^*
    \theta(1^*_{\CC^\times,\bM}) \Omega\right)^{(0,1)}
    = (\Phi^*\theta)^{(0,1)}.
    \)
\end{NB}%

Let $\cF$ be the space of fields, i.e., the space of all
$\CC^\times$-equivariant $C^\infty$-maps $\Phi\colon
P_{K_\Sigma^{1/2}}\to \bM$.

We can consider $\varphi_\CS(\CC_\cF)$, the sheaf of vanishing cycle
with respect to $\CS$ on the moduli space of holomorphic sections of
$K_{\Sigma^{1/2}}\times_{\CC^\times}\bM$ as in
\cite[\S7]{2015arXiv150303676N}.

When $\Sigma$ is an elliptic curve, we do not need to introduce a
$\CC^\times$-action as we have a nonvanishing holomorphic $1$-form
$dz$. We consider a genuine $C^\infty$-map $\Phi\colon \Sigma\to \bM$
and define
\begin{equation*}
    \CS(\Phi) = \int_\Sigma (\Phi^*\theta)^{(0,1)}\wedge dz.
\end{equation*}

\begin{Remark}
Suppose that we have a $\CC^\times\times\CC^\times$-action on $\bM$
such that $(t_1,t_2)^*\theta = t_1 t_2 \theta$. It corresponds to a
\emph{cotangent type} gauge theory in \cite{2015arXiv150303676N}. We
take a $\CC^\times\times\CC^\times$-bundle $P'$ such that the
associated bundle $P'_{\CC^\times\times\CC^\times}\CC\cong K_\Sigma$,
where $\CC^\times\times\CC^\times$ acts on $\CC$ by $(t_1,t_2)z = t_1
t_2 z$. (In other words, we take two line bundles $M_1$, $M_2$ over
$\Sigma$ such that $M_1\otimes M_2 = K_\Sigma$.) Then a
$\CC^\times\times\CC^\times$-equivariant $C^\infty$-map $\Phi\colon
P'\to \bM$ gives a well-defined $(1,1)$-form $(\Phi^*\theta)^{(0,1)}$.
\end{Remark}

\subsection{Gauged $\sigma$-models}

Next suppose the following data are given:
\begin{itemize}
      \item $(\bM,\omega)$ is a (holomorphic) symplectic manifold with
a $G$-action preserving $\omega$.
  \item there is a $\CC^\times$-action commuting with the $G$-action,
such that $t^*\omega = t^2\omega$ for $t\in\CC^\times$.
      \item there is a $G$-invariant Liouville form $\theta$ such that
    $d\theta = \omega$ and $t^*\theta = t^2 \theta$.
\end{itemize}
For a representation, the second $\CC^\times$-action is the scaling
one.
Note that $
\begin{NB}
    \langle\xi,\bmu\rangle \colon 
\end{NB}%
\g\ni \xi\mapsto -\theta(\xi^*)\in C^\infty(\bM)$ is a comoment map
for the $G$-action on $\bM$. Here $\xi^* \equiv \xi^*_\bM$ is the
vector field on $\bM$ generated by $\xi\in\g$.
\begin{NB}
    By Cartan formula, $d(\theta(\xi^*)) = d i_{\xi^*}\theta = -
    i_{\xi^*} d\theta + L_{\xi^*}\theta$. Since $\theta$ is
    $G$-invariant, we have $L_{\xi^*}\theta = 0$. We substitute
    $d\theta = \omega$, we find $-d\theta(\xi^*) = i_{\xi^*}\omega$.
\end{NB}%
\begin{NB}
    Let $1^*_{\CC^\times,x}$ be the generating vector field of the
    $\CC^\times$-action. We have
    $d(\theta(1^*_{\CC^\times,x})) = -i_{1^*_{\CC^\times,x}}\omega
    + L_{1^*_{\CC^\times,x}} \theta = -i_{1^*_{\CC^\times,x}}\omega
    + 2\theta$.

    From the commutativity of $\CC^\times$, $t_*(1^*_{\CC^\times,x})=
    d/ds|_{s=0} t \cdot e^s\cdot x = d/ds|_{s=0} e^s\cdot t \cdot x =
    1^*_{\CC^\times,t\cdot x}$. The condition says
    $\theta(t_*(1^*_{\CC^\times,x})) = t^2
    \theta(1^*_{\CC^\times,x})$. Therefore
    $\theta(1^*_{\CC^\times,t\cdot x}) = t^2 \theta(1^*_{\CC^\times,x})$.
\end{NB}

\begin{NB}
    For a symplectic vector space $(\bM,\omega)$, we define $\theta_x
    = \omega(x,dx)$ for $x\in\bM$. Here $dx$ is the
    tautological $\bM$-valued $1$-form, i.e., $dx(v) = v$, in the left
    hand side $v$ is a tangent vector at $x$, in the right hand side
    $v$ is in $\bM$, and the identification is $T_x\bM \cong \bM$. It
    is $G$-invariant and $d\theta = \omega(dx,dx)$.
\end{NB}%

We now consider a general $G$.
We also fix a $C^\infty$ principal $G$-bundle $P$. We then consider
the fiber product $P\times_\Sigma P_{K_\Sigma^{1/2}}$, which is a
principal $G\times \CC^\times$-bundle over $\Sigma$.

Now a \emph{field} consists of pairs
\begin{itemize}
      \item $\DB+A$ : a partial connection on $P$,
    
      \item $\Phi$ : a $C^\infty$ map $P\times_\Sigma
    P_{K_\Sigma^{1/2}}\to \bM$, which is equivariant under
    $G\times\CC^\times$-action.
\end{itemize}
The space of fields is denoted by $\cF$ again.

We regard $\DB+A$ as a collection of $\g$-valued $(0,1)$-forms
$A_\alpha$ such that $A_\alpha = -\DB g'_{\alpha\beta}\cdot
{g'_{\alpha\beta}}^{-1} + g'_{\alpha\beta}A_\beta
{g'_{\alpha\beta}}^{-1}$, where $g'_{\alpha\beta}$ is the transition
function for $P$. Similarly we regard $\Phi$ as a collection
$\Phi_\alpha\colon U_\alpha\to \bM$ such that $\Phi_\alpha =
(g'_{\alpha\beta},g_{\alpha\beta})\cdot \Phi_\beta$, using the
$G\times\CC^\times$-action on $\bM$.
\begin{NB}
    Thus $\DB \Phi_\alpha = \DB g'_{\alpha\beta} \cdot \Phi_\beta 
    + (g'_{\alpha\beta},g_{\alpha,\beta}) \DB\Phi_\beta$. Therefore
    \begin{equation*}
        \begin{split}
            \langle\theta,\DB\Phi_\alpha\rangle
            & 
            = g_{\alpha\beta}^2 \langle \theta,\DB\Phi_\beta\rangle
        + \langle ((g'_{\alpha\beta},g_{\alpha\beta})\cdot \Phi_\beta)^*\theta,
        \DB g'_{\alpha\beta}\cdot {g'_{\alpha\beta}}^{-1}\rangle
\\
          & 
          = g_{\alpha\beta}^2 \langle \theta,\DB\Phi_\beta\rangle
        + \langle \Phi_\alpha^*\theta,
        \DB g'_{\alpha\beta}\cdot {g'_{\alpha\beta}}^{-1}\rangle
        \end{split}
    \end{equation*}
\end{NB}%
The term involving $\DB g'_{\alpha\beta}\cdot {g'_{\alpha\beta}}^{-1}$
is absorbed by coupling with the moment map $\bmu$ on $\bM$, as
$\langle\xi,\bmu\rangle = -\theta(\xi^*)$. In fact, we have
\begin{equation*}
    \begin{split}
        & \langle A_\alpha, \bmu(\Phi_\alpha)\rangle
    % =     \langle A_\alpha,
    % \bmu((g'_{\alpha,\beta},g_{\alpha,\beta})\cdot\Phi_\beta)\rangle
   %  =
   % g_{\alpha\beta}^2 \langle {g'_{\alpha\beta}}^{-1} A_\alpha g'_{\alpha\beta},
   % \bmu(\Phi_\beta)\rangle
\\
= \; &
   -\langle \DB g'_{\alpha\beta} \cdot {g'_{\alpha\beta}}^{-1},
   \bmu(\Phi_\alpha)\rangle
   + \langle g'_{\alpha\beta} A_\beta {g'_{\alpha\beta}}^{-1},
   \bmu((g'_{\alpha\beta},g_{\alpha\beta})\cdot \Phi_\beta)\rangle
\\
= \; &
   \langle %(% (g'_{\alpha\beta},g_{\alpha\beta})\cdot 
   \Phi_\alpha^*\theta,
        \DB g'_{\alpha\beta}\cdot {g'_{\alpha\beta}}^{-1}\rangle
   + 
   g_{\alpha\beta}^2 \langle A_\beta,
   \bmu(\Phi_\beta)\rangle,
    \end{split}
\end{equation*}
where we have used the equivariance of the moment map in the second
equality. Thus $\{ (\Phi_\alpha^*\theta)^{(0,1)} - \langle A_\alpha,
\bmu(\Phi_\alpha)\rangle\}$ defines a well-defined $(1,1)$-form. We
introduce the holomorphic Chern-Simons functional as
\begin{equation*}
    \CS(A,\Phi) \defeq \int_\Sigma (\Phi_\alpha^*\theta)^{(0,1)}
    -\langle A_\alpha,\bmu(\Phi_\alpha)\rangle.
\end{equation*}
Then $(A,\Phi)$ is a critical point of $\CS$ if and only if
\begin{itemize}
      \item $\bmu(\Phi_\alpha) = 0$,
    \item $\{ \Phi_\alpha\}$ is a holomorphic section of
  $(P\times_\Sigma P_{K_\Sigma^{1/2}})\times_{G\times\CC^\times}\bM$. Here
  a holomorphic structure on $P$ is given by $\DB+A$.
\end{itemize}

Thus $(A,\Phi)$ is a twisted holomorphic map from $\Sigma$ to the
quotient stack $[\bmu^{-1}(0)/ G]$.

When $\bM$ is a symplectic representation of $\bM$, this construction
recovers the previous one in \cite[\S7]{2015arXiv150303676N}.

We consider the dual of the equivariant cohomology with compact
support of $\cF$ with vanishing cycle coefficient:
\begin{equation}\label{eq:7}
    H^*_{c,\mathcal G(P)}(\cF, \varphi_\CS(\CC_\cF))^*,
\end{equation}
where $\mathcal G(P)$ is the complex gauge group denoted by $\mathcal
G_\CC(P)$ in \cite{2015arXiv150303676N}.

The main proposal in \cite{2015arXiv150303676N} can be generalized in
a straightforward manner:
\begin{itemize}
      \item Take $\Sigma = \proj^1$.
    Define a commutative product on $H^*_{c,\mathcal
      G(P)}(\cF, \varphi_\CS(\CC_\cF))^*$ and consider an affine
    variety given by its spectrum. It should be the underlying affine
    variety of the moduli space of vacua of the gauged $\sigma$-model
    for $(G,\bM,\omega)$ \cite{HKLR}. In particular, it is expected to
    satisfy various properties claimed in the physics literature.
\end{itemize}

\section{Symplectic leaves and transversal slices}

Let us consider the case $\bM$ is a symplectic representation of
$G$. 

\begin{Question}\label{q:leaves}
    (1) Are there only finitely many symplectic leaves in the Coulomb
    branch $\mathcal M_C$ ?

    (2) Are symplectic leaves and transversal slices Coulomb branches
    of gauge theories ?

    (3) Is there a `natural' order-reversing bijection between
    symplectic leaves of the Coulomb branch $\mathcal M_C$ and the
    Higgs branch $\mathcal M_H$ ?
    Suppose (2) is true. Are gauge theories for symplectic leaves and
    transversal slices of $\mathcal M_C$ and $\mathcal M_H$
    interchanged under the bijection ?
\end{Question}

For quiver gauge theories of type $ADE$, Higgs branches are quiver
varieties while Coulomb branches are slices in affine Grassmannian.
See \cite[\S3(ii)]{2015arXiv150303676N}. (This statement is true only
the \emph{dominance} condition is satisfied. See
\subsecref{subsec:nonzeroW}.) In this case it is well-known that both
leaves are parametrized by dominant weights, and the answer is
yes. See \subsecref{subsec:nonzeroW} for detail.

Also the question~(3) is one of requirements in the definition of the
symplectic duality \cite[\S10]{2014arXiv1407.0964B}.\footnote{The
  author thanks Tom Braden who explained the statement in talks by at
  Kyoto and Boston.}
Properties (1),(2) are known for Higgs branches $\mathcal M_H$, as we
will review below. Therefore (1),(2) for $\mathcal M_C$ are natural to
ask. The second question in (3) \emph{does} make sense thanks to this
fact.

In fact, the symplectic duality requires an order-reversing bijection
on posets of \emph{special} symplectic leaves, but we ignore a subtle
difference between special and arbitrary symplectic leaves at this
moment. Properties (1),(2) make sense even without relation to $\mathcal M_H$.

The naturality in (2) is vague, but the author cannot make it
precise. In fact, we will find examples where (3) fail in
\subsecref{sec:affine} below. It means that an ad-hoc bijection for
(2) may be unnatural. Therefore we should regard (2) are also false
for these. These examples will be related to \emph{un}special leaves,
hence this question must be corrected in future.

Also the question (3) is a little imprecise as Coulomb branches are
affine algebraic varieties, and leaves are usually only
quasi-affine. A typical example is the $n^{\mathrm{th}}$ symmetric
product $S^n\CC^2$ and its standard stratification $\bigsqcup
S_\nu\CC^2$ parametrized by partitions $\nu$ of $n$. The stratum
$S_\nu\CC^2$ is an open subvariety in $S^{\nu_1}\CC^2\times
S^{\nu_2}\CC^2\times\cdots$ for $\nu = (1^{\nu_1}2^{\nu_2}\dots)$. The
latter is the Coulomb branch of a certain gauge theory. See
\subsecref{sec:affine} below. Note also that
$S^{\nu_1}\CC^2\times S^{\nu_2}\CC^2\times\cdots \to
\overline{S_\nu\CC^2}$ is a finite birational morphism, but is
\emph{not} an isomorphism. Therefore, as the best, we can ask whether
the closure of a symplectic leaf is an image of a finite birational
morphism from the Coulomb branch of a gauge theory. It is not clear
(at least to the author) whether this determine the gauge theory
uniquely or not. We ignore this complexity and simply ask whether a
symplectic leaf is the Coulomb branch hereafter.

An important special case of (2) is
\begin{Question}\label{q:smooth}
    Suppose the Higgs branch $\mathcal M_H = \bM\tslash G$ consists of
    only $\bM^G$.\footnote{The author thanks Alexander Braverman to
      point out a mistake in an earlier version.} Is the corresponding
    Coulomb branch $\mathcal M_C$ a smooth symplectic manifold ? Is
    converse true also ?
\end{Question}

\subsection{Higgs branch}\label{sec:higgs-branch}

Let us recall a well-known result on symplecitc leaves on the Higgs
branch $\bM\tslash G = \bmu^{-1}(0)\dslash G$. 

We consider $\bM\tslash G$ as a set of closed $G$-orbits in
$\bmu^{-1}(0)$. It has a natural stratification given by conjugacy
classes of stabilizers (see \cite[\S6]{Na-quiver}, which is based on
\cite{MR1127479}):
\begin{equation}\label{eq:3}
    \bM\tslash G = \bigsqcup_{(\widehat G)} (\bM\tslash G)_{(\widehat G)},
\end{equation}
where a stratum $(\bM\tslash G)_{(\widehat G)}$ consists of orbits through points
$x\in\bM\tslash G$ whose stablizer $\Stab_G(x)$ is conjugate to $\hat
G$. Note that $\widehat G$ is a reductive group as $x$ has a closed orbit.

Let us describe $(\bM\tslash G)_{(\widehat G)}$ as a symplectic
reduction.  (See \cite[Th.~3.5]{MR1127479} for detail.)
Let $\bM^{\widehat G} = \{ m\in\bM \mid \text{$gm = m$ for $g\in \hat
  G$}\}$ denote the fixed point locus. It is a symplectic vector subspace.
Let $N_G(\widehat G)$ denote the normalizer of $\widehat G$ in $G$. We have an
action of $\nicefrac{N_G(\widehat G)}{\widehat G}$ on $\bM^{\widehat G}$. Then we have
\begin{equation}\label{eq:1}
    (\bM\tslash G)_{(\widehat G)} = (\bM^{\widehat G}\tslash
    \nicefrac{N_G(\widehat G)}{\widehat G})_{(\{e\})}.
\end{equation}
Here the subscript $(\{e\})$ means $\nicefrac{N_G(\widehat G)}{\hat
  G}$-orbits through points whose stabilizer is conjugate to the
identity group $\{e\}$, i.e., free orbits. In particular,
$(\bM^{\widehat G}\tslash \nicefrac{N_G(\widehat G)}{\widehat
  G})_{(\{e\})}$ is symplectic. We do \emph{not} claim that it is a
symplectic leaf as it may not be connected in general. This is a
subtle point, which we do not understand well. See the
example~\ref{ex:SL2}.

A transversal slice to a stratum $(\bM\tslash G)_{\widehat G}$ is also
described as a symplectic reduction. (See \cite[\S2]{MR1127479}. See
also \cite[\S3]{Na-qaff} and \cite{CB:normal}.) Let $m\in
\bmu^{-1}(0)$ such that $\Stab_G(m) = \widehat G$.
We take the orbit $Gm$ through $m$ and its tangent space $T_mGm$ at
$m$. The latter is an isotropic subspace. We consider the symplectic
normal space $\widehat\bM\defeq (T_m Gm)^\omega/T_m Gm$, which has a natural symplectic structure. It is naturally a representation of $\widehat G$. Then
$\bM\tslash G$ and $\widehat\bM\tslash \widehat G$ are locally isomorphic
around $[m]$ and $[0]$.
\begin{NB}
    $\bM$ is locally look like $G\times_{\hat
      G}(\hat\g^{\perp*}\times\widehat\bM)$ so that the moment map is
    $\mu(G\cdot(g,\xi,m)) = \operatorname{Ad}^*(g)(\xi + \hat\mu(m))$.
\end{NB}%
Under the local isomorphism the stratum $(\bM\tslash G)_{(\widehat
  G)}$ is mapped to $(\widehat\bM\tslash \widehat G)_{(\widehat G)}$.
This is nothing but the fixed subspace $T \defeq \widehat\bM^{\widehat G}$.
%(It was denoted by $T$ in \cite[p.391]{Na-quiver}.) 
We take a complementary subspace $T^\perp$ of $T$ in $\widehat\bM$. The transversal slice is given by
\begin{equation}\label{eq:6}
    T^\perp\tslash \widehat G.
\end{equation}

\subsection{Coulomb branch}

Combining Question~\ref{q:leaves} with (\ref{eq:1}, \ref{eq:6}), we
arrive at the following:

\begin{Question}
    Strata (resp.\@ transversal slices) of $\mathcal M_C$ are the
    Coulomb branches of gauge theories $\Hyp(T^\perp)\tslabar \widehat
    G$ (resp.\@ $\Hyp(\bM^{\widehat G})\tslabar \nicefrac{N_G(\widehat
      G)}{\widehat G}$) ?
\end{Question}

We have the following naive construction: Consider morphisms between
Higgs branches.
\begin{equation*}
    \bM^{\widehat G}\tslash \nicefrac{N_G(\widehat
      G)}{\widehat G}
    \to \bM\tslash G \to
    T^\perp \tslash \widehat G.
\end{equation*}
They induces morphisms between moduli spaces of twisted holomorphic
maps from a Riemann surface $\Sigma$. Taking the dual of cohomology
groups with vanishing cycle coefficients, we get homomorphisms in the
same direction. They conjecturally respect the multiplication, and
hence induce morphisms on the spectrum. Thus morphisms between Coulomb
branches goes in the opposite direction, and the strata and slices are
interchanged.

In the proposed `definition' of Coulomb branches, the affine GIT
quotient $\bM\tslash G = \bmu^{-1}(0)\dslash G$ is not enough: we need
the quotient stack $[\bmu^{-1}(0)/G]$. But we do not think this makes
a crucial difference.

\section{Complete intersection and
conical $\CC^\times$-action}

For a coweight $\la$ of $G$, we define
\begin{equation}\label{eq:9}
    \Delta(\la) \defeq -\sum_{\alpha\in\Delta^+}|\langle\alpha,\la\rangle|
    + \frac14 \sum_\mu |\langle\mu,\la\rangle| \dim \bM(\mu),
\end{equation}
where $\Delta^+$ is the set of positive roots, and $\mu$ runs over the
set of weights of $\bM$. Here $\bM(\mu)$ is the weight space.

The gauge theory $\Hyp(\bM)\tslabar G$ is said \emph{good} if
$2\Delta(\la)> 1$ for any $\la\neq 0$, and \emph{ugly} if
$2\Delta(\la)\ge 1$ and not good. 
(Note $2\Delta(\la)\in\ZZ$ as weights appear in pairs for a symplectic representation.)
The monopole formula predicts that $2\Delta(\la)$ corresponds to a
weight of the $\CC^\times$-action on $\mathcal M_C$. (See
\cite[\S4]{2015arXiv150303676N}.) Therefore
\begin{itemize}
      \item $\Hyp(\bM)\tslabar G$ is `good' or `ugly' if and only if
    the $\CC^\times$-action on the Coulomb branch $\mathcal M_C$ is
    \emph{conical}.
\end{itemize}
Recall that a $\CC^\times$-action is conical if the coordinate ring
$\CC[\mathcal M_C]$ is only with nonnegative weights, and the zero
weight space is $1$-dimensional consisting of constant functions.
Similarly $\Hyp(\bM)\tslabar G$ is good if and only if the
$\CC^\times$-aciton is conical and there is no function of weight $1$.

On the other hand, it is observed in
\cite[\S2(iv)]{2015arXiv150303676N} that $2\Delta(\la)\ge 1$ looks
similar to a complete intersection criterion of $\bmu=0$ for the Higgs
branch by Crawley-Boevey \cite{CB} for quiver gauge theories. The
following question was asked:

\begin{Question}\label{q:complete}
    Is the followings true ?

    The $\CC^\times$-action of the Coulomb branch $\mathcal M_C$ is
    conical
    \begin{NB}
        (resp.\@ and has no function of weight $1$) 
    \end{NB}%
    \begin{NB}
        if and 
    \end{NB}%
    only if the level set $\bmu^{-1}(0)$ of the moment map equation
    for the Higgs branch $\mathcal M_H$ is complete intersection
    \begin{NB}
        (resp.\@ and irreducible)
    \end{NB}%
    in $\bM$ of $\dim = \dim \bM - \dim G$.

    We assume $\bM$ is a faithful representation of $G$.
\end{Question}

Since $\bM\tslash G$ depends on the image $G\to \grpSp(\bM)$, the
faithfulness assumption is a natural requirement. 
If $G\to \grpSp(\bM)$ has a positive dimensional kernel,
$\bmu^{-1}(0)$ has dimension larger than $\dim \bM - \dim G$.
If $G\to\grpSp(\bM)$ has only finite kernel, the complete intersection
property does not matter, but we have the following example:
Suppose that $\CC^\times$ acts on $\bN=\CC$ by weight $N\neq 0$, and
take $\bM = \bN\oplus\bN^*$. The level set $\bmu^{-1}(0)$ is the same
for the case $N=1$ and is not irreducible, but it is good if $N>1$ and
ugly if $N=1$. On the other hand, the Coulomb branch $\mathcal M_C$
\emph{does} depends on $N$. One can see from the monopole formula as
$\mathcal M_C = \CC^2/(\ZZ/N\ZZ)$ with the $\CC^\times$-action induced
from $t\cdot (x,y) = (tx,ty)$ for $(x,y)\in\CC^2$.

In \cite{2015arXiv150303676N} it was asked two conditions in
Question~\ref{q:complete} are equivalent. But it turns out that there
are counter examples for the converse.

\section{Poisson and intersection homology groups}

Let $X$ be a Poisson variety and let $H\!P_*(X)$ denote its Poisson
homology group, defined in \cite{MR2729282}. We are interested in the
case $X$ is affine and the degree $0$ part $H\!P_0(X)$. It is known
that $H\!P_0(X)$ is the quotient of $\CC[X]$ by the linear span of all
brackets. The following is a $\mathcal M_H/\mathcal M_C$ version of
conjecture of Proudfoot \cite{MR3238150}:

\begin{Question}\label{q:HPHH}
    Suppose $\Hyp(\bM)\tslabar G$ is good.

    \textup{(1)}
    Are there natural isomorphisms of graded vector spaces ?
    \begin{equation*}
        H\! P_0(\mathcal M_H) \cong I\!H^*(\mathcal M_C), \quad
        H\!P_0(\mathcal M_C) \cong I\!H^*(\mathcal M_H).
    \end{equation*}
    Here the grading of the left hand sides are given by the
    $\CC^\times$-action.

    \textup{(2)} Let $\cAh[\mathcal M_H]$, $\cAh[\mathcal M_C]$ be the
    quantization of $\CC[\mathcal M_H]$, $\CC[\mathcal M_C]$ respectively.
    Are there natural isomorphisms of graded vector spaces ?
    \begin{equation*}
            H\!H_0(\cAh[\mathcal M_H])
        \cong I\!H^*_{\CC^\times}(\mathcal M_C), \quad
        H\!H_0(\cAh[\mathcal M_C])
        \cong I\!H^*_{\CC^\times}(\mathcal M_H).
    \end{equation*}
    Here $H\!H_0$ denote the zero-th Hochschild homology group, i.e.,
    the quotient of the quantization by the span of its commutator.
\end{Question}

The quantization $\cAh[\mathcal M_H]$ is defined under the assumption
that $\bM$ is of cotangent type, i.e., $\bM = \bN\oplus\bN^*$ for a
$G$-module $\bN$. Then we consider the ring $\mathcal D_\hbar(\bN)$ of
$\hbar$-differential operators on $\bN$, and define $\cAh[\mathcal
M_H]$ as its quantum hamiltonian reduction by $G$. The Coulomb branch
$\mathcal M_C$ is expected to have a quantization always as a
$\CC^\times$-equivariant homology group in \eqref{eq:7}, where
$\CC^\times$ acts on $\cF$ through the $\CC^\times$-action on $\Sigma
= \proj^1$. Here $\hbar$ appears as $H^*_{\CC^\times}(\mathrm{pt}) =
\CC[\hbar]$.

This question is well formulated \emph{except} the meaning of the
naturality, which we do not discuss here. If we do not assume the good
%or ugly 
condition, we easily find couter-examples anyway, as we have many
cases with $\mathcal M_H = \{0\}$, while $\mathcal M_C$ is
nontrivial. In this case, $\mathcal M_C$ is expected to be smooth,
hence $I\!H^*(\mathcal M_C) \cong H^*(\mathcal M_C)$, $H\!P_0(\mathcal
M_C) \cong H^{\dim \mathcal M_C}(\mathcal M_C)$. These cohomology
groups are often nontrivial.

In order to exclude these cases, we have assumed the good %or ugly
condition.

%We do not have a clear understanding at this moment.

%\section{Flavor symmetry group}

\section{(Counter)Examples}

\subsection{Quiver gauge theories of type $ADE$}\label{sec:quiv-gauge-theor}

Let us consider a quiver gauge theory (see
\cite[2(iv)]{2015arXiv150303676N}). We follow the notation there. We
first suppose $W=0$. Note that scalars in $G = \prod_{i\in Q_0}
\GL(V_i)$ act trivially on $\bM$. We need $\bM$ is a faithful
representation in Question~\ref{q:complete} for the Higgs branch.
Also the embedding $\la\colon\CC^\times\to G$ of scalars gives
$\Delta(\la) = 0$. Therefore $\mathcal M_C$ is never conical.
Therefore we replace $G$ by $\prod_{i\in Q_0} \GL(V_i)/\CC^\times$.
We also assume that the support $\{ i\in Q_0 \mid \dim V_i\neq 0\}$ of
$\dim V$ is connected, as $\bM$ is not faithful otherwise.
\begin{NB}
    For example, a smaller graph could be disconnected. Then we need
    to divide $(\CC^\times)^{\pi_0(Q)}$.
\end{NB}%
We write the dimension vector $\dim V$ by $\mathbf v$.

We first assume that the quiver $Q = (Q_0,Q_1)$ is of type $ADE$.
The Coulomb branch $\mathcal M_C$ is the moduli space of centered
$ADE$ monopoles on $\RR^3$ where $\mathbf v$ is the corresponding
monopole charge (\cite{MR1451054} for type $A$, \cite{MR1677752} in
general\footnote{The author thanks Vasily Pestun who explained this
  statement to a collaborator of \cite{BFN}. It is compatible with his
  work with Nekrasov of $4d$ quiver gauge theories
  \cite{2012arXiv1211.2240N}.}). This is a smooth symplectic
manifold. By \cite{MR769355,MR987771,MR1625475}, it is the same as the
moduli space of centered based rational maps from $\proj^1$ to the
flag manifold of type $ADE$. The definition in \cite{BFN} produces
$\mathcal M_C$ as this moduli space.

On the other hand, consider the Higgs branch $\mathcal M_H$.  It is
known that any element in $\bmu^{-1}(0)$ is automatically nilpotent
for $ADE$ quivers \cite{Lu-crystal}. Therefore the only closed orbit
in $\bmu^{-1}(0)$ is $0$. (See also \cite[Prop.~6.7]{Na-quiver}.)
Therefore the answer to Question~\ref{q:smooth} is yes.

\begin{Proposition}
    $\bmu^{-1}(0)$ is complete intersection of dimension $\dim \bM - \dim
    G$ if and only if $\mathbf v$ is a positive root.
\end{Proposition}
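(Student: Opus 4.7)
The plan is to combine the standard moment map dimension estimate with Lusztig's nilpotency theorem for type $ADE$ preprojective algebras.

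First I would set up the dimension count. With $G=\prod_i \GL(V_i)/\CC^\times$, we have $\dim G = \sum_i v_i^2 - 1$ and $\dim\bM = 2\sum_{h\in Q_1} v_{s(h)}v_{t(h)}$. Since the scalar $\CC^\times$ acts trivially on $\bM$, the moment map $\bmu$ factors through $\g^*$ of dimension exactly $\dim G$, so $\bmu^{-1}(0)$ is cut out by $\dim G$ scalar equations. This gives the lower bound $\dim \bmu^{-1}(0)\ge \dim\bM - \dim G$, with equality precisely when the components of $\bmu$ form a regular sequence on the polynomial ring $\CC[\bM]$. Since $\bM$ is affine space, hence Cohen--Macaulay, dimension equality here is equivalent to the complete intersection property. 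So the question reduces to when $\dim \bmu^{-1}(0) = 2\sum_h v_{s(h)}v_{t(h)} - \sum_i v_i^2 + 1$.

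Second, I would invoke Lusztig's theorem \cite{Lu-crystal}, noted earlier in the excerpt, that for $Q$ of type $ADE$ every element of $\bmu^{-1}(0)$ is nilpotent as a representation of the preprojective algebra $\Pi_Q$. Combined with the classical fact that the scheme of nilpotent $\Pi_Q$-representations is Lagrangian in $\bM = T^*\mathrm{Rep}(Q,\mathbf v)$ (which in turn follows from Gabriel's theorem: the nilpotent locus admits a stratification by finitely many $G$-orbits indexed by multisets of positive roots, and each stratum has dimension at most $\dim\mathrm{Rep}(Q,\mathbf v)$), this yields $\dim \bmu^{-1}(0) = \sum_h v_{s(h)}v_{t(h)} = \tfrac12 \dim\bM$. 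Comparing with the lower bound of the previous paragraph, the complete intersection condition becomes
\begin{equation*}
  \sum_h v_{s(h)}v_{t(h)} = 2\sum_h v_{s(h)}v_{t(h)} - \sum_i v_i^2 + 1,
  \qquad\text{i.e.,}\qquad \langle\mathbf v,\mathbf v\rangle = 1,
\end{equation*}
where $\langle\cdot,\cdot\rangle$ is the Tits form of $Q$.

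Finally, since $Q$ is of type $ADE$ with connected support, the Tits form is positive definite on the root lattice, and the standard classification of simply-laced finite root systems identifies the integral vectors of Tits norm $1$ with positive roots. This gives both directions of the equivalence. The main thing to verify carefully is the upper bound half of the Lagrangian statement, that no irreducible component of $\bmu^{-1}(0)$ has dimension exceeding $\sum_h v_{s(h)}v_{t(h)}$; this is the content of the Gabriel orbit-stratification argument and is the only place where type $ADE$ is used in an essential way (e.g.\ the argument fails for affine types, where imaginary roots have Tits norm $0$).
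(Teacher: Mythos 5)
Your argument is correct, but it proceeds along a genuinely different route from the paper. The paper simply invokes Crawley-Boevey's general complete-intersection criterion \cite[Th.~1.1]{CB}: the inequality $2-\langle\mathbf v,\mathbf C\mathbf v\rangle\ge\sum_k(2-\langle\beta^{(k)},\mathbf C\beta^{(k)}\rangle)$ over all decompositions of $\mathbf v$ into positive roots, whose right-hand side vanishes identically in type $ADE$ and whose left-hand side is $2-2q(\mathbf v)\le 0$ with equality exactly at roots. You instead compute $\dim\bmu^{-1}(0)$ directly: the fiber bound gives every component dimension at least $\dim\bM-\dim G$, while Gabriel's theorem (finitely many $\prod_i\GL(V_i)$-orbits on $\mathrm{Rep}(Q,\mathbf v)$) exhibits $\bmu^{-1}(0)$ as the union of conormal varieties to orbits, hence Lagrangian of pure dimension $\tfrac12\dim\bM$; comparing the two numbers reduces everything to $q(\mathbf v)=1$. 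Your approach is self-contained and makes transparent exactly where finiteness of type enters (and why it breaks for affine quivers, where the orbit count is infinite and imaginary roots have Tits norm $0$), whereas the paper's black-box criterion has the advantage of applying verbatim in the affine case, which is exploited in the very next proposition. One stylistic remark: Lusztig's nilpotency theorem is not actually needed in your argument --- the identification of $\bmu^{-1}(0)$ with the union of conormal varieties holds for the moment map of any group action on a cotangent bundle, nilpotent or not; what you really use is only the finiteness of the orbit stratification. Also note the harmless but necessary observation, which you do make implicitly, that since the diagonal scalar $\CC^\times$ acts trivially its moment map component vanishes identically, so the zero level for $\prod_i\GL(V_i)/\CC^\times$ coincides with that for $\prod_i\GL(V_i)$ and is cut out by exactly $\dim G=\sum_i(\dim V_i)^2-1$ equations.
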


\begin{proof}
By the criterion in \cite[Th.~1.1]{CB}, $\bmu^{-1}(0)$ is complete
intersection of $\dim = \dim \bM - \dim G$ if and only if
\begin{equation}\label{eq:8}
    2 - \langle\mathbf v, \mathbf C \mathbf v\rangle
    \ge \sum_k (2 - \langle \beta^{(k)},\mathbf C \beta^{(k)}\rangle)
\end{equation}
for any decomposition $\mathbf v = \sum \beta^{(k)}$ such that
$\beta^{(k)}$ is a positive root (or equivalently nonzero positive
vector).
Here $\mathbf C = (2\delta_{ij} - a_{ij})$ with $a_{ij}$, the number
of edges (regardless of orientation) between $i$ and $j$ if $i\neq j$,
and its twice if $i=j$.  (The latter does not occur for type $ADE$.)
Since we are assuming $Q$ is of type $ADE$, the right hand side is
always $0$. On the other hand, the left hand side is nonpositive, and
is zero if and only if $\mathbf v$ is a positive root.
\end{proof}

\begin{NB}
    \eqref{eq:8} is equivalent to
    \begin{equation*}
        -\sum_{k < l} \langle \beta^{(k)}, \mathbf C\beta^{(l)}\rangle
        \ge \# \{ \beta^{(k)} \} - 1.
    \end{equation*}
\end{NB}

Next let us study $\Delta(\la)$ in \eqref{eq:9}.
\begin{NB}
    Take $\la$, the embedding of one of entries of $T(V_i)$ for a
    vertex $i$. Then $2\Delta(\la) = - 2(\dim V_i - 1) + \sum a_{ij}
    \dim V_j$. Therefore the theory is good or ugly only if \( - \sum
    (2\delta_{ij} - a_{ij}) \dim V_j \ge -1.  \) See
    \cite[(2.2)]{2015arXiv150303676N}.
\end{NB}%
\begin{NB}
    But it does not imply $\mathbf v$ is a root yet. Consider
    $\mathbf v = (1,2,2,1)$ for type $A_4$.
\end{NB}%
\begin{NB}
    Next suppose $\dim V_i \ge 2$, and take $\la$ so that $\la^i =
    (1,-1,0,\dots)$. Then $\Delta(\la) = - 4 - 4 (\dim V_i - 2) +
    2\sum a_{ij} \dim V_j = 2\left\{ -2 (\dim V_i - 1) + \sum a_{ij}
        \dim V_j\right\}$.
\end{NB}%
We take a maximal torus $T = \prod_i T(V_i)/\CC^\times$, where
$T(V_i)$ is the diagonal subgroup of $\GL(V_i)$ and consider
$\la\colon\CC^\times\to T$.
According to weights on $\bigoplus_i V_i$, we decompose $V = V^{(1)}\oplus V^{(2)}\oplus\cdots$ such that $\la(t)$ acts on $t^{\la^{(k)}}$ on $V^{(k)}$.
%We may assume $\la^{(1)} > \la^{(2)} > \cdots$.
%
Set $\beta^{k} = (\beta^{(k)}_i) = (\dim V^{(k)}_i)$. Then
\begin{equation}\label{eq:10}
    \begin{split}
    2\Delta(\la) &= \sum_{k<l} |\la^{(k)} - \la^{(l)}| \left( \sum_i
    -2\beta^{(k)}_i \beta^{(l)}_i
    + \sum_{i,j} a_{ij} \beta^{(k)}_i \beta^{(l)}_j
    \right)
\\
    & = -\sum_{k<l} |\la^{(k)} - \la^{(l)}| \langle \beta^{(k)}, \mathbf C
    \beta^{(l)}\rangle.
    \end{split}
\end{equation}
\begin{NB}
    Since $\langle\bullet,\mathbf C\bullet\rangle$ is positive
    definite, Schwarz' inequality says $|\langle\beta^{(k)},\mathbf
    C\beta^{(l)}\rangle| < |\beta^{(k)}||\beta^{(l)}| = 2$ unless
    $\beta^{(k)} = \beta^{(l)}$. Therefore $\langle\beta^{(k)},
    \mathbf C\beta^{(l)}\rangle = \pm 1, 0$. It is $+1\Longleftrightarrow
    \beta^{(k)} - \beta^{(l)}$ is a root. It is $-1\Longleftrightarrow
    \beta^{(k)} + \beta^{(l)}$ is a root.
\end{NB}

Suppose that $\mathbf v$ is not a positive root. We decompose $\mathbf v$ as
sum $\sum \beta^{(k)}$ of positive roots $\beta^{(k)}$ so that any
combination $\beta^{(k)} + \beta^{(l)}$ is not a root. Then $4\le
\langle \beta^{(k)} + \beta^{(l)}, \mathbf
C(\beta^{(k)}+\beta^{(l)})\rangle = \langle \beta^{(k)},\mathbf
C\beta^{(k)}\rangle + \langle \beta^{(l)}, \mathbf C\beta^{(l)}\rangle
+ 2\langle \beta^{(k)}, \mathbf C\beta^{(l)}\rangle = 4 + 2\langle
\beta^{(k)}, \mathbf C\beta^{(l)}\rangle$ for $k\neq l$. Therefore
$\langle\beta^{(k)}, \mathbf C\beta^{(l)}\rangle\ge 0$.
Hence $2\Delta(\la)\le 0$, so the gauge theory is not good or ugly.
Therefore the answer to Question~\ref{q:complete} is yes.

Let us continue to study $\Delta(\la)$.

\begin{Proposition}\label{prop:good}
%    \textup{(1)} 
$\Hyp(\bM)\tslabar G$ is never good.

%     \textup{(2)} $\Hyp(\bM)\tslabar G$ is ugly if and only if
% $\mathbf v$ is a positive root.
\end{Proposition}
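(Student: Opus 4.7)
Proof proposal. The plan is to exhibit a single nonzero coweight $\la$ for which $2\Delta(\la)\le 1$, which directly contradicts the goodness condition $2\Delta(\la)>1$. The construction should be the simplest possible decomposition $V=V^{(1)}\oplus V^{(2)}$ with only two weights, so that the sum in \eqref{eq:10} collapses to a single term.

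Concretely, I would pick a vertex $i_0\in Q_0$ with $\mathbf v_{i_0}>0$ (which exists since $\mathbf v\neq 0$ is connected and nonzero) and a one-dimensional subspace $V^{(1)}\subset V_{i_0}$, setting $V^{(2)}$ to be a complement of $V^{(1)}$ inside $V$. Take $\la$ to act with weight $1$ on $V^{(1)}$ and weight $0$ on $V^{(2)}$. Then $\la$ is nonzero modulo the diagonal $\CC^\times$ since it is not a scalar weight, and $\beta^{(1)}=\alpha_{i_0}$ (the simple root at $i_0$), $\beta^{(2)}=\mathbf v-\alpha_{i_0}$. Plugging into \eqref{eq:10} gives
\begin{equation*}
2\Delta(\la) = -\langle\alpha_{i_0},\mathbf C(\mathbf v-\alpha_{i_0})\rangle
            = 2 - \langle\alpha_{i_0},\mathbf C\mathbf v\rangle,
\end{equation*}
so it suffices to choose $i_0$ with $\langle\alpha_{i_0},\mathbf C\mathbf v\rangle\ge 1$.

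Such a vertex exists by positive-definiteness of the Cartan form in type $ADE$: from $\langle\mathbf v,\mathbf C\mathbf v\rangle = \sum_i \mathbf v_i\,\langle\alpha_i,\mathbf C\mathbf v\rangle > 0$, at least one summand is positive, and since $\mathbf C$ has integer entries, $\langle\alpha_{i_0},\mathbf C\mathbf v\rangle$ is a positive integer, hence $\ge 1$. This $i_0$ automatically satisfies $\mathbf v_{i_0}>0$, so the construction of $\la$ above goes through, yielding $2\Delta(\la)\le 1$ and hence the theory is not good.

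The only subtle point is checking that this $\la$ really descends to a nonzero coweight of $G=\prod_i\GL(V_i)/\CC^\times$; this holds because $\la$ takes two distinct values ($0$ and $1$), so it is not a multiple of the scalar embedding. Otherwise the argument is a direct computation with \eqref{eq:10}, and the main input is just the positive-definiteness of $\mathbf C$ in the $ADE$ case together with integrality of its pairing with root vectors.
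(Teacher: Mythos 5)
Your proposal is correct and takes essentially the same route as the paper: both use the two-term decomposition $\mathbf v=(\mathbf v-\alpha_i)+\alpha_i$ in \eqref{eq:10} to get $2\Delta(\la)=2-\langle\alpha_i,\mathbf C\mathbf v\rangle$ (for the minimal weight difference), and both rule out $\langle\alpha_i,\mathbf C\mathbf v\rangle\le 0$ holding at every vertex via the positive-definiteness $\langle\mathbf v,\mathbf C\mathbf v\rangle=\sum_i\mathbf v_i\langle\alpha_i,\mathbf C\mathbf v\rangle>0$. The only difference is presentational (you exhibit one bad vertex directly rather than arguing by contradiction), and you even spell out the integrality and nontriviality of $\la$ modulo the scalar $\CC^\times$, which the paper leaves implicit.
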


\begin{proof}
%    (1) 
    Take a decomposition $\mathbf v = \beta^1 + \beta^2$, $\beta^1 =
    \mathbf v - \alpha_i$, $\beta^2 = \alpha_i$, we find $2\Delta(\la)
    = -|\la^{(1)}-\la^{(2)}|(\langle \alpha_i, \mathbf C\mathbf v\rangle
    - 2)$.  We have $2\Delta(\la) > 1$ for $\la$ of this form if and
    only if $\langle \alpha_i, \mathbf C\mathbf v\rangle\le 0$.
But this is never possible unless $\mathbf v=0$.
\begin{NB}
    As $\langle \mathbf v, \mathbf C\mathbf v\rangle \ge 0$, and the
    equality if and only $\mathbf v = 0$.
\end{NB}%
\begin{NB}
    For example, of type $A$, we take $i$ be the left-most vertex in
    the support of $\mathbf v$.
\end{NB}%
\begin{NB}
(2) We represent a coweight $\la$ as an integer vector
$(\la^i_k)_{i\in Q_0, k=1,\dots,\dim V_i}$. Then
\begin{equation*}
    2\Delta(\la) = -\sum_{i\in Q_0} \sum_{k \neq l} |\la^i_k - \la^i_l|
    + \frac12 \sum_{i,j} a_{ij} \sum_{k,p} | \la^i_k - \la^j_p|.
\end{equation*}
Fixing $i$, $j$, $p$, we have
\begin{equation*}
    \sum_{k\neq l}
    |\la^i_k - \la^i_l| \le 
    2(\dim V_i - 1)
    \sum_{k} |\la^i_k - \la^j_p|.
\end{equation*}
\end{NB}%
\end{proof}

Suppose $Q$ is of type $A$ and $\mathbf v$ is a positive root. Then $G$
is a torus and $2\Delta(\la) > 0$ unless $\la = 0$. Therefore
$\Hyp(\bM)\tslabar G$ is ugly. It is also possible to check
$\Hyp(\bM)\tslabar G$ is ugly if $Q$ is of type $D$ and $\mathbf v$ is
positive root as follows.

Suppose $Q$ is of type $D_\ell$. We consider the case $\mathbf v =
(12\dots 2\genfrac{}{}{0pt}{0}{1}{1})$, other cases are similar. We
represent a coweight $\la$ as an integer vector $(\la^1,\la^2_1,\la^2_2,\dots,
\linebreak[3]\la^{\ell-2}_1,\la^{\ell-2}_2,\la^{\ell-1},\la^\ell)$. We have
\begin{equation*}
    \begin{split}
        2\Delta(\la) & = -2\sum_{p=2}^{\ell-2} |\la^p_1-\la^p_2|
    + |\la^1 - \la^2_1| + |\la^1 - \la^2_2|
    + \sum_{p=2}^{\ell-3} \sum_{a,b=1}^n |\la^p_a - \la^{p+1}_b|
    \\
    & + \sum_{p=\ell-1,\ell} |\la^{\ell-2}_1 - \la^p| + 
    |\la^{\ell-2}_2 - \la^p|.
    \end{split}
\end{equation*}
We use
\begin{equation*}
    \begin{split}
        & 2 |\la^2_1 - \la^2_2| \le |\la^1 - \la^2_1| + |\la^1 -
        \la^2_2| + \frac12 \sum_{a,b=1}^2 |\la^2_a - \la^3_b|,
        \\
        & 2 |\la^p_1 - \la^p_2| \le \frac12 \sum_{a,b=1}^2
        |\la^{p-1}_a - \la^{p}_b| + \frac12 \sum_{a,b=1}^2 |\la^{p}_a
        - \la^{p+1}_b| \quad (p=3,\dots,\ell-3),
        \\
        & 2 |\la^{\ell-2}_1 - \la^{\ell-2}_2| \le \frac12 \sum_{a,b=1}^2
        |\la^{\ell-3}_a - \la^{\ell-2}_b| + 
        \sum_{p=\ell-1,\ell} |\la^{\ell-2}_1 - \la^p| + 
    |\la^{\ell-2}_2 - \la^p|.
    \end{split}
\end{equation*}
(In fact, $p=\ell-1$ is enough in the second sum in the last
equality.)  Taking sum, we find $2\Delta(\la)\ge 0$ and the equality
holds if and only if all entries of $\la$ are the same, i.e., it is
zero as a coweight of $\prod_i \GL(V_i)/\CC^\times$. Therefore it is
ugly.

We do not know for exceptional cases, though it is a finite check.

\subsection{Affine types}\label{sec:affine}
Next suppose the underlying graph is of type \emph{affine} $ADE$ with
$W=0$. In this case, the Coulomb branch $\mathcal M_C$ is
conjecturally the moduli space of calorons, in other words, instantons
on $\RR^3\times S^1$. The charge is again given by $\mathbf v$. More
precisely we need two modifications:
a) We take the symplectic reduction by $\CC^\times$, the action induced from $\CC^\times$-action on the base $\RR^3\times S^1 = \CC^\times \times \CC$.
b) We take the Uhlenbeck partial compactification like in case of
$\RR^4$. Therefore $\mathcal M_C$ has a stratification
\begin{equation}\label{eq:4}
    \mathcal M_C(\mathbf v) = \bigsqcup 
    \mathcal M_C^\circ (\mathbf v - |\nu|\delta)\times
    S_{\nu} (\RR^3\times S^1)_c,
\end{equation}
where $\delta$ is the primitive imaginary root vector, $\mathcal
M_C^\circ (\mathbf v - |\nu|\delta)$ denotes the moduli space of
\emph{genuine} calorons with charge $\mathbf v - |\nu|\delta$, and
$S_\nu(\RR^3\times S^1)$ is a stratum of the symmetric product of
$\RR^3\times S^1$ given by a partition $\nu$, and $S_\nu(\RR^3\times
S^1)_c$ means the symplectic reduction by the
$\CC^\times$-action. Therefore the strata are parametrized by
partitions $\nu$ such that $\mathbf v - |\nu|\delta$ is nonnegative.

This result remains true for Jordan quiver. It corresponds to
$\U(1)$-calorons, and there is no genuine calorons. Therefore the
Coulomb branch is expected to be $S^{\dim V}(\RR^3\times S^1)_c =
\bigsqcup S_\nu(\RR^3\times S^1)_c$. (It is confirmed in \cite{BFN}.)

Moreover %we have
\begin{aenume}
      \item\label{Cs} The stratum in \eqref{eq:4}, replaced by
    $\mathcal M_C(\mathbf v-|\nu|\delta)\times \prod
    S^{\nu_k}(\RR^3\times S^1)_c$ for $\nu = (1^{\nu_1}
    2^{\nu_2}\dots)$, is the Coulomb branch of the quiver gauge
    theory, where the graph is the disjoint union of the original $Q$
    and copies of the Jordan quiver with the dimension vector $\mathbf v
    - |\nu|\delta$, $\nu_1$, $\nu_2$, \dots respectively.
    \begin{NB}
        Groups are $G_V/\CC^\times$, $\PGL(\nu_1)$, $\PGL(\nu_2)$,\dots.
    \end{NB}%

      \item\label{Ct} The transversal slice is the product of
    \emph{centered} $ADE$ instanton moduli spaces on $\RR^4$ with
    instanton numbers $\nu_1$, $\nu_2$, \dots. If we ignore the
    \emph{centered} condition, it is the Coulomb branch of the quiver
    gauge theory, where the quiver is the disjoint union of copies of
    the original $Q$
    \begin{NB}
        of affine $ADE$ type 
    \end{NB}%
    with the dimension vectors $(\mathbf v, \mathbf w) =
    \underbrace{(\delta,\delta_{0i}),\dots}_{\text{$\nu_1$ times}},
    \underbrace{(2\delta,\delta_{0i}),\dots}_{\text{$\nu_2$ times}}$,
    \dots. See \cite[1(vii) and 3(i)]{2015arXiv150303676N}.
\end{aenume}

\begin{NB}
    Suppose $\mathbf v = n\delta$ for brevity. The open stratum is $\nu =
    \emptyset$. The transversal slice is a point. The closed stratum
    is $\nu = (n)$. The transversal slice is $n$ instantons on $\RR^4$.

    For the stratum $\nu = (1)$ (i.e., $1$ bubble), the transversal
    slice is the moduli of centered $1$-instanton on $\RR^4$.  (e.g.,
    $\RR^4\times (\RR^4/\pm 1) \subset (\RR^4\times \{0\}/\pm 1)$ is
    a stratum. The centered moduli is $\RR^4/\pm 1$.)

    For the stratum $\nu = (1^n)$, i.e., $1$ bubble at $n$ distinct
    points, the closure of the stratum is $S^n(\RR^3\times S^1)$, the
    transversal slice is $(\text{$1$-instantons on $\RR^4$})^n$.
\end{NB}%

\begin{Remark}
    For affine type $A$, the moduli space of calorons is isomorphic to
    the moduli space of framed locally free parabolic sheaves on
    $\proj^1\times\proj^1$. (See \cite{MR2681686,MR2395473} and
    \cite{Takayama}.) This is expected to be true for any affine type
    if we replace locally free parabolic sheaves by principal
    $G_{ADE}$-bundles with parabolic structures. The definition of
    \cite{BFN} gives this moduli space for affine type $A$, and
    conjecturally in general.
\end{Remark}

Let us consider $\mathcal M_H = \bmu^{-1}(0)\dslash G$. A closed
$G$-orbit corresponds to a semisimple representation of the
preprojective algebra of type affine $ADE$. By \cite{CB}, we have the
classification of simple representations: they are either $S_i$ (one
dimensional at the vertex $i$) or have dimension $\delta$. In the
latter case, $\mathcal M_H$ is a particular case of Kronheimer's
construction \cite{Kr}, and is $\RR^4/\Gamma$ for a finite subgroup
$\Gamma\subset\SL(2)$ corresponding to the affine $ADE$ graph. 
The origin corresponds to a direct sum of $S_i$'s, but any point
except the origin is a simple representation.
Since semisimple representation is a direct sum of simple
representations, we have a stratification
\begin{equation}\label{eq:5}
    \mathcal M_H(\mathbf v) = \bigsqcup S_\nu(\RR^4\setminus \{0\}/\Gamma),
\end{equation}
where $\nu$ is a partition such that $\mathbf v - |\nu|\delta$ is
nonnegative. Here we have a direct sum of $S_i$'s corresponding to
$\mathbf v - |\nu|\delta$. It corresponds to the `origin' in $\RR^4$.

A semisimple representation of the preprojective algebra for the
Jordan quiver is a pair of commuting semisimple matrices. Hence
$\bmu^{-1}(0)\dslash G = S^n(\CC^2) = \bigsqcup
S_\nu(\RR^4)$. Therefore we have the same result for the Jordan quiver.

\begin{aenume}
   \setcounter{enumi}{2}
     \item\label{Hs} The stratum in \eqref{eq:5}, replaced by
   $S^{\nu_1}(\RR^4/\Gamma)\times
   S^{\nu_2}(\RR^4/\Gamma)\times\cdots$, is the Higgs branch of the
   quiver gauge theory in (\ref{Ct}) above. The quiver is the disjoint
   union of copies of $Q$, the dimension vectors are $(\mathbf v,\mathbf
   w) = (\nu_1\delta,\delta_{0i}), (\nu_2\delta,\delta_{0i}),\dots$.

     \item\label{Ht} The transversal slice is the product 
   \[\mathcal
   M_H({\mathbf v - |\nu|\delta})\times
   \underbrace{S^1_0(\RR^4)\times\dots}_{\text{$\nu_1$ times}}\times
   \underbrace{S^2_0(\RR^4)\times\dots}_{\text{$\nu_2$ times}}\times
   \cdots.\]
   Here $S^n_0(\RR^4)$ is the $n$th \emph{centered} symmetric
   power, i.e., $\big\{ (x_1, \dots, x_n)\bmod S_n \mid
%     \linebreak[4]
     \sum x_i =
     0\big\}$.
   If we ignore the \emph{centered} condition, it is the Higgs branch
   of the quiver gauge theory in (\ref{Cs}) above. The quiver is the
   union of $Q$ and copies of the Jordan quiver, the dimension vector
   is $\mathbf v-|\nu|\delta,
   \underbrace{\delta,,\dots}_{\text{$\nu_1$ times}},
   \underbrace{2\delta,\dots}_{\text{$\nu_2$ times}}$,
   \dots.
\end{aenume}

In fact, for (\ref{Ht}), we can put the centered condition, namely we
impose that endomorphisms of $\CC^k$ in $\bM$ are trace-free. Since
Coulomb branches are unchanged if we add trivial representations to
$\bM$, we can do the same for (\ref{Cs}).

\begin{NB}
    Suppose $\mathbf v = n\delta$ for brevity. Therefore $\mathcal M_H =
    S^n(\RR^4/\Gamma)$. The open stratum is $\nu = (1^n)$. The
    transversal slice is a point, which is $S^{\mathbf v -
      |\nu|\delta}(\RR^4/\Gamma)$. The closed stratum is $\nu =
    \emptyset$. (Hence all points lie on the origin.) The transversal
    slice is $S^n(\RR^4/\Gamma) = S^{\mathbf v - |\nu|\delta}(\RR^4/\Gamma)$.

    For the stratum $\nu = (1^{n-1})$ (i.e., $n-1$ distinct points in
    $\RR^4\setminus\{0\}/\Gamma$ and one point at the origin), the
    transversal slice is $\RR^4/\Gamma = S^{\dim
      V-|\nu|\delta}(\RR^4/\Gamma)$.

    For the stratum $\nu = (n)$, i.e., a single point with
    multiplicity $n$ outside $0$, the stratum is
    $\RR^4\setminus\{0\}/\Gamma$. The transversal slice is $S^{n}_0\RR^4$.

    For the stratum $\nu = (k,l)$ ($k\neq l$), i.e., two points with
    multiplicities $k$, $l$ respectively outside $0$, the closure of
    the stratum is $\RR^4/\Gamma\times \RR^4/\Gamma$. The transversal
    slice is $S^{n-k-l}(\RR^4/\Gamma)\times S^{k}_0\RR^4\times S^{l}_0\RR^4$.

    For the stratum $\nu = (k^2)$, i.e., two distinct points with
    multiplicity $k$ outside $0$, the closure of
    the stratum is $S^2(\RR^4/\Gamma)$. The transversal
    slice is $S^{n-2k}(\RR^4/\Gamma)\times S^{k}_0\RR^4\times S^{k}_0\RR^4$.
\end{NB}

Comparing two stratifications, we find that answers to
Question~\ref{q:leaves}(1),(2) are yes, but (3) is no. The most
natural order-reversing bijection is $\nu\to\nu^t$, but the dimension
vectors do not match in (\ref{Cs}), (\ref{Ht}) and (\ref{Ct}),
(\ref{Hs}) respectively.
Even ignoring the ordering, it seems we do not have a `natural'
bijection compatible with the change of dimension vectors.
In fact, \emph{special} leaves have only $\nu =
(1^\bullet)$,\footnote{The author thanks Ben Webster for an
  explanation of this result.} hence we have an order-reversing
bijection on special leaves.

These examples presents a difficulty to generalize the description of
Coulomb branches for general type quiver gauge theories. It is a
partial compactification of the space of based maps from $\proj^1$ to
the corresponding Kac-Moody flag manifold. We need to add `defects'
which correspond to semisimple representations of the corresponding
preprojective algebra. Such a partial compactification has not been
studied before, as far as the author knows.

The answer to Question~\ref{q:complete} is yes thanks to the following:

\begin{Proposition}
    \textup{(1)} $\bmu^{-1}(0)$ is complete intersection of dimension
    $\dim\bM - \dim G$ if and only if $\mathbf v$ is either $\alpha$,
    $\delta - \alpha$ or $\delta$, where $\alpha$ is a positive root
    of the root system of finite $ADE$ type, obtained from $Q$ by
    removing the $0$-vertex.

    \textup{(2)} If $\Hyp(\bM)\tslabar G$ is good or ugly, $\mathbf v$ is
    either of the above form.

    \textup{(3)} If $\Hyp(\bM)\tslabar G$ is good, $\mathbf v$ is $\delta$.
\end{Proposition}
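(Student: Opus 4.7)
My plan is to treat all three parts together via two tools: Crawley-Boevey's complete intersection criterion \cite{CB} (already used above in the finite $ADE$ case) for Part (1), and the explicit formula \eqref{eq:10} for $2\Delta(\la)$ for Parts (2) and (3). For affine $ADE$ I write $(\cdot,\cdot) := \langle\cdot,\mathbf C\,\cdot\rangle$ and use the facts that $(\cdot,\cdot)$ is positive semi-definite with kernel $\CC\delta$, so $(\beta,\beta)=2$ for real roots and $(m\delta,m\delta)=0$ for imaginary roots, while any positive integer vector which is neither a real nor an imaginary root satisfies $(\mathbf v,\mathbf v)\ge 4$.

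For Part (1), I would verify the Crawley-Boevey inequality
\[
2-(\mathbf v,\mathbf v) \;\ge\; \sum_k \bigl(2-(\beta^{(k)},\beta^{(k)})\bigr)
\]
over all decompositions of $\mathbf v$ into positive affine roots. For each of the three listed $\mathbf v$'s, a bookkeeping of the $\alpha_0$-coefficient forces every $\beta^{(k)}$ to be real (so RHS $=0$), except that $\mathbf v=\delta$ may be taken as its own decomposition; in all cases the inequality holds. Outside the list I exhibit a violating decomposition: $\alpha+n\delta = \alpha + \delta + \cdots + \delta$ for $n\ge 1$, $n\delta - \alpha = (\delta-\alpha) + (n-1)\delta$ for $n\ge 2$, $n\delta = \delta + \cdots + \delta$ for $n\ge 2$; for a positive non-root $\mathbf v$, the decomposition into simple roots has RHS $=0$ while LHS $\le -2$.

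For Part (2), applied to a cocharacter $\la$ with only two distinct weights, \eqref{eq:10} becomes $2\Delta(\la) = -(\beta^{(1)},\beta^{(2)})$ for a splitting $\mathbf v = \beta^{(1)}+\beta^{(2)}$ into nonzero non-negative vectors. For each $\mathbf v$ outside the list I produce such a splitting with $(\beta^{(1)},\beta^{(2)})\ge 0$, which gives $2\Delta(\la)\le 0$ and hence rules out good and ugly. The root cases reuse the splittings from Part (1): the vanishings $(\alpha,n\delta) = (\delta-\alpha,(n-1)\delta) = (\delta,(n-1)\delta) = 0$ all follow because $\delta$ is in the radical. For $\mathbf v$ a positive non-root I would invoke Kac's canonical decomposition $\mathbf v = \sum \beta^{(k)}$ into positive roots with no two summing to a root, which in affine $ADE$ forces $(\beta^{(k)},\beta^{(l)})\ge 0$ for $k\ne l$; a $\la$ with distinct weights on the pieces then gives $2\Delta(\la)\le 0$.

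For Part (3), granted (2) it remains only to rule out the real-root cases $\mathbf v=\alpha$ and $\mathbf v=\delta-\alpha$. For any real positive root $\mathbf v$ and any $i\in\operatorname{supp}(\mathbf v)$, splitting off a one-dimensional subspace of $V_i$ yields $\mathbf v = (\mathbf v-\alpha_i) + \alpha_i$ and $2\Delta(\la) = 2-(\alpha_i,\mathbf v)$. The good hypothesis $2\Delta(\la)\ge 2$ for all such $\la$ would force $(\alpha_i,\mathbf v)\le 0$ for every $i\in\operatorname{supp}(\mathbf v)$, contradicting
\[
2 = (\mathbf v,\mathbf v) = \sum_{i\in\operatorname{supp}(\mathbf v)} v_i\,(\alpha_i,\mathbf v).
\]
Hence $\mathbf v$ must be imaginary, i.e.\ $\mathbf v = \delta$ by (2). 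The main obstacle I anticipate is the non-root case of (2): Kac's canonical decomposition is clean but needs a careful citation, and as a backup one can give a direct argument based on $2(\beta^{(1)},\beta^{(2)}) = (\mathbf v,\mathbf v) - (\beta^{(1)},\beta^{(1)}) - (\beta^{(2)},\beta^{(2)})$ together with $(\mathbf v,\mathbf v)\ge 4$ via a short case analysis on the shape of the support.
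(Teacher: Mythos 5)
Your proof is correct and follows essentially the same route as the paper: Crawley-Boevey's criterion \eqref{eq:8} with the real/imaginary-root dichotomy for (1), the splittings $\alpha+n\delta$, $(\delta-\alpha)+(n-1)\delta$, $\delta+(n-1)\delta$ together with a decomposition into positive roots no two of which sum to a root, fed into \eqref{eq:10}, for (2), and the splitting $\mathbf v=(\mathbf v-\alpha_i)+\alpha_i$ for (3). The one (welcome) divergence is that in (3) you run the $2\Delta(\la)=2-\langle\alpha_i,\mathbf C\mathbf v\rangle$ computation directly for affine real roots, whereas the paper reduces to Proposition~\ref{prop:good} for finite type $ADE$ --- which is slightly imprecise when $\delta-\alpha$ has full affine support (e.g.\ $\delta-\alpha_c$ in type $D_4^{(1)}$), a case your version handles cleanly.
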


\begin{proof}
    (1)
    We can use the criterion \eqref{eq:8} above. If $\beta^{(k)}$ is
    an imaginary (resp.\ a real) root, $\langle\beta^{(k)},\mathbf
    C\beta^{(k)}\rangle = 0$ (resp.\ $=2$). Also positive roots are
    $n\delta + \alpha$ ($n\ge 0$), $n\delta-\alpha$ ($n > 0$),
    $n\delta$ ($n > 0$) for a positive root $\alpha$ of the underlying
    finite root system. We see that dimension vectors $\alpha$,
    $\delta-\alpha$, $\delta$ give complete intersection.

    (2) We use the same argument as in finite type case using
    \eqref{eq:10}. If $\mathbf v$ is not a positive root, we have a
    decomposition $\mathbf v = \sum \beta^{(k)}$ such that
    $\langle\beta^{(k)},\mathbf C\beta^{(l)}\rangle \ge 0$ for $k\neq
    l$ as before: We have $\langle\beta,\mathbf C\beta\rangle \ge
    0$. It is equal to $0$ (resp.\ $2$ if and only $\beta$ is an
    imaginary (resp.\ a real) root. Therefore $2\Delta(\la)\le 0$, so
    the gauge theory is not good or ugly.

    If $\mathbf v = n\delta+\alpha$ for a positive root $\alpha$ of the
    underlying finite root system and $n > 0$, we take the
    decomposition $\beta^{(1)} = n\delta$, $\beta^{(2)} =
    \alpha$. Then $\langle \beta^{(1)}, \mathbf C\beta^{(2)}\rangle =
    0$. Similarly, for $\mathbf v=n\delta-\alpha$ for $n > 1$, we take
    $\beta^{(1)} = (n-1)\delta$, $\beta^{(2)} = \delta-\alpha$ to
    deduce a contradiction. For $\mathbf v = n\delta$ with $n > 1$, consider
    $\beta^{(1)} = (n-1)\delta$, $\beta^{(2)} = \delta$.

    (3) %As in the proof of \propref{prop:good}, we have
    %$\langle\alpha_i,\mathbf C\mathbf v\rangle\le 0$ for any $i$. This is
    %possible only if $\mathbf v = \delta$.
    If $\mathbf v = \alpha$ or $\delta-\alpha$, it is a quiver gauge
    theory of type $ADE$. Therefore \propref{prop:good} implies that
    it is not good.
\end{proof}

For affine type $A$, we see that $\Hyp(\bM)\tslabar G$ is good if
$\mathbf v = \delta$. It is probably true in general. Assuming it, we
have $\mathcal M_H = \CC^2/\Gamma$, where $\Gamma$ is a finite
subgroup of $\SL(2)$ corresponding to $Q$. In view of
Question~\ref{q:HPHH}, it is interesting to compute $I\!H^*(\mathcal M_C)$, $H\!P_0(\mathcal M_C)$.

\subsection{Quiver gauge theories with $W\neq
  0$}\label{subsec:nonzeroW}

Let us turn to quiver gauge theories with $W\neq 0$. We take two
$Q_0$-graded vector spaces $V$, $W$. We take $G = \prod_i \GL(V_i)$
unlike the case $W=0$. When $W\neq 0$, we can deform and take
(partial) resolution of the Coulomb branch. But we set the parameter
to be $0$, and consider the most singular Coulomb branch. We denote
the dimension vector $\dim W$ by $\mathbf w$.

Assume the underlying graph is of type $ADE$. In
\cite[3(ii)]{2015arXiv150303676N}, it was conjectured that the Coulomb
branch is the moduli space of $S^1$-equivariant ${ADE}$-instantons on
$\RR^4$ where $\mathbf w$ (resp.\@ $\mathbf w - \mathbf C\mathbf v$)
corresponds to the coweight $\lambda$ (resp.\@ $\mu$) $\colon S^1\to
G_{ADE,c}$ giving the $S^1$-action on the fiber at $0$ (resp.\@
$\infty$). Here $\mathbf C$ is the Cartan matrix. This can be true
only if $\mu = \mathbf w - \mathbf C\mathbf v$ is dominant, as the
$S^1$-action on the fiber corresponds to a \emph{dominant} coweight.

After reading the preprint \cite{2015arXiv150304817B}, which appears
in arXiv shortly after \cite{2015arXiv150303676N} and then looking at
the original physics literature \cite{MR1636383}, the author
understand that the conjecture must be corrected. Namely the Coulomb
branch is the moduli space of singular $ADE$-monopoles on $\RR^3$. In
order to connect with the previous conjecture, let us recall that
singular monopoles are $S^1$-equivariant instantons on the Taub-NUT
space \cite{Kronheimer-msc}. The Taub-NUT space and $\RR^4$ are both
$\CC^2$ as a holomorphic symplectic manifold, but the Riemannian
metrics are different. It is expected that moduli spaces of
$S^1$-equivariant instantons on two spaces are isomorphic as
holomorphic symplectic manifolds when $\mu$ is dominant, but the
Taub-NUT case has more general moduli spaces, as we do not need to
assume $\mu$ is dominant. (It corresponds to the monopole
charge.)\footnote{The author thanks Sergey Cherkis for his explanation
  on instantons on (multi-)Taub-NUT spaces, a.k.a.\ bow varieties over
  years.}
The definition in \cite{BFN} produces a certain moduli space of
bundles over $\proj^1$, which conjecturally isomorphic to the moduli
space of singular monopoles.

Let us describe the stratification on $\mathcal M_C$. As in the case
with $W=0$, we need to consider the Uhlenbeck partial compactification
of the moduli space of instantons. Since we are considering
$S^1$-equivariant instantons, the bubbling only occurs at $0$, the
$S^1$-fixed point in the Taub-NUT space. 
(A bubbled instanton is defined over $\RR^4$.)
The remained $S^1$-equivariant instanton has a different dominant
coweight for the $S^1$-action on the fiber at $0$. But the coweight
for $\infty$ is unchanged. Therefore
\begin{equation*}
    \mathcal M_C(\mu,\lambda)
    = \bigsqcup_{\substack{\text{$\la'$ : dominant}
        \\\mu\le\la'\le\la}} \mathcal M_C^\circ(\mu,\lambda'),
\end{equation*}
where $\mathcal M_C^\circ(\mu,\lambda')$ denote the moduli space of
genuine $S^1$-equivariant instantons with 1) the monopole charge
$\mu$, and 2) the coweight $\lambda'$ for the $S^1$-action at $0$. For
example, $\lambda'=\lambda$ is the open stratum, and the minimum
$\lambda'\ge\mu$ is the closed stratum.

The stratum (resp.\ the transversal slice) is the Coulomb branch of
the quiver gauge theory of the same type with the dimension vectors
given by $(\mu,\lambda')$ (resp.\ $(\lambda',\lambda)$).

Let us consider the Higgs branch $\mathcal M_H = \bmu^{-1}(0)$. It is
a quiver variety of type $ADE$, and the stratification is given \cite[Prop.~6.7]{Na-quiver} as in \subsecref{sec:higgs-branch}:
\begin{equation*}
    \mathcal M_H(\mu,\lambda) = \bigsqcup_{\substack{\text{$\mu'$ : dominant}
        \\\mu\le\mu'\le\la}}
      \mathcal M_H^\circ(\mu',\lambda),
\end{equation*}
where $\mathcal M_H^\circ(\mu',\lambda) = \mathfrak
M_0^{\mathrm{reg}}(\mathbf v^{(0)},\mathbf w)$ with $\lambda = \mathbf
w$, $\mu' = \mathbf w - \mathbf C\mathbf v^{(0)}$ in the notation in
\cite{Na-quiver}.
\begin{NB}
    Since $\mathbf v^{(0)}\le \mathbf v$, $\mu = \mathbf w - \mathbf C
    \mathbf v^{(0)} \ge \mathbf w - \mathbf C \mathbf v = \mu'$.
\end{NB}%
For example, $\mu' = \lambda$ is the closed stratum and the maximum $\mu'\le\lambda$ is the open stratum.

The stratum (resp.\ the transversal slice) is the Higgs branch of the
quiver gauge theory of the same type with the dimension vectors given
by $(\mu',\lambda)$ (resp.\ $(\mu,\mu')$).

Comparing two stratifications, we find that the answer to
Question~\ref{q:leaves} is yes. The bijection is simply given by
$\lambda'=\mu'$.

There should be a similar stratification for quiver gauge theories of
affine types, where $\mathcal M_C$ is a moduli space of
$\ZZ/\ell\ZZ$-equivariant instantons on the Taub-NUT space, where
$\ell$ is the level of $\dim W$. But the author is not familiar enough
with such a moduli space, and in particular, it is not clear whether a
puzzle on the affine Dynkin diagram automorphism group raised in
\cite[3(ii)]{2015arXiv150303676N} is clarified or not. When $\mu$ is
dominant and $Q$ is of type $A$, we can consider equivariant
instantons on $\RR^4$ instead. Then we get quiver varieties of affine
type $A$. Then answers to Question~\ref{q:leaves}(1),(2) are yes,
while we have the same phenomenon as in $W=0$ case for (3).

Let us turn to Question~\ref{q:complete}. The answer is yes thanks to
\begin{Proposition}\label{prop:qWgood}
    Suppose $Q$ is finite or affine type.

    \textup{(1)} If $\Hyp(\bM)\tslabar G$ is good or ugly,
    $\bmu^{-1}(0)$ is complete intersection of dimension $\dim\bM -
    \dim G$.

    \textup{(2)} If $\Hyp(\bM)\tslabar G$ is good, $\mathbf w -
    \mathbf C\mathbf v$ is dominant.
\end{Proposition}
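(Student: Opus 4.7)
For part (2), the plan is to exhibit a cocharacter that directly detects dominance. Fix $i\in Q_0$ with $v_i\ge 1$ and let $\la_i\colon\CC^\times\to G$ be the cocharacter $t\mapsto\operatorname{diag}(t,1,\ldots,1)$ in the $i$-th factor $\GL(V_i)$. Plugging $\la_i$ into \eqref{eq:9} and collecting the contributions from the positive roots of $\GL(V_i)$ and from the weights of $\bM$ at vertex $i$ (only the edges at $i$ and the framing $W_i$ contribute nonzero pairings), one obtains
\begin{equation*}
2\Delta(\la_i)=-2(v_i-1)+\sum_j a_{ij}v_j+w_i=2+(\mathbf w-\mathbf C\mathbf v)_i.
\end{equation*}
The good hypothesis $2\Delta(\la_i)>1$, together with integrality, forces $(\mathbf w-\mathbf C\mathbf v)_i\ge 0$. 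When $v_i=0$ the inequality $w_i+\sum_j a_{ij}v_j\ge 0$ is automatic. This proves (2).

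For part (1), the plan is to invoke the Crawley--Boevey trick: form the unframed quiver $\tilde Q$ obtained from $Q$ by adjoining a vertex $\infty$ joined to each $i\in Q_0$ by $w_i$ edges, and set $\tilde{\mathbf v}=(\mathbf v,1)$. Under this identification, $\bmu^{-1}(0)$ for $(Q,\mathbf v,\mathbf w)$ corresponds to the unframed moment-map zero set for $(\tilde Q,\tilde{\mathbf v})$, so by \cite[Th.~1.1]{CB} the complete intersection condition at dimension $\dim\bM-\dim G$ translates into the inequality
\begin{equation*}
\sum_{k<l}c_{kl}\ge r-1, \qquad c_{kl}:=-\langle\tilde\beta^{(k)},\tilde{\mathbf C}\tilde\beta^{(l)}\rangle,
\end{equation*}
for every decomposition $\tilde{\mathbf v}=\sum_{k=1}^{r}\tilde\beta^{(k)}$ into positive roots of $\tilde Q$. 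Since the $\infty$-coordinate of $\tilde{\mathbf v}$ equals $1$, a unique summand $\tilde\beta^{(k_0)}$ is supported at $\infty$; write $\epsilon^{(k)}\in\{0,1\}$ for its $\infty$-component.

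The bridge to the monopole formula is the identity
\begin{equation*}
2\Delta(\la)=\sum_{k<l}|\la^{(k)}-\la^{(l)}|\,c_{kl},
\end{equation*}
valid for any cocharacter $\la$ of $G$ acting on the weight piece $V^{(k)}$ (of dimension $\beta^{(k)}$) with weight $\la^{(k)}$, subject to $\la^{(k_0)}=0$. This follows from the expansion $\langle\tilde\beta^{(k)},\tilde{\mathbf C}\tilde\beta^{(l)}\rangle=\langle\beta^{(k)},\mathbf C\beta^{(l)}\rangle-\epsilon^{(k)}\langle\mathbf w,\beta^{(l)}\rangle-\epsilon^{(l)}\langle\mathbf w,\beta^{(k)}\rangle$ (the cross term $2\epsilon^{(k)}\epsilon^{(l)}$ vanishes for $k\ne l$ since only one $\epsilon$ is nonzero) and direct comparison with \eqref{eq:9}. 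Granting good or ugly, so $2\Delta(\la)\ge 1$ for every $\la\ne 0$, argue by contradiction: fix a decomposition violating CB. Whenever $c_{kl}>0$, equivalently $(\tilde\beta^{(k)},\tilde\beta^{(l)})<0$ in the symmetric form on $\tilde Q$, Kac's root-sum property (Prop.~5.1(b) in his book on infinite-dimensional Lie algebras) guarantees that $\tilde\beta^{(k)}+\tilde\beta^{(l)}$ is again a positive root of $\tilde Q$; merging these two parts into a single new one reduces both $r-1$ and $\sum_{k<l}c_{kl}$ by exactly one (as $c_{kl}\ge 1$), so failure of CB is preserved. Iterating, we reach a decomposition in which $c_{kl}\le 0$ for all $k\ne l$. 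Now choose $\la$ with $\la^{(k_0)}=0$ and pairwise distinct integer $\la^{(k)}$ for $k\ne k_0$; then $|\la^{(k)}-\la^{(l)}|\ge 1$, giving $2\Delta(\la)\le 0$, contradicting good/ugly.

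The main obstacle I anticipate is the uniform applicability of the merging step inside the Kac--Moody datum $\tilde Q$, which is typically hyperbolic when $Q$ is affine. Fortunately Kac's proposition only requires symmetrizability of the Cartan matrix and applies to pairs involving imaginary roots as well; moreover the common case of $n\delta$-type imaginary roots sitting inside an affine subsystem forces $c_{kl}=0$ automatically (since $\delta$ is in the radical of the form on that subsystem), so no merger is needed there. The base $r=1$ of the iteration never violates CB as $\sum_{k<l}c_{kl}=0=r-1$.
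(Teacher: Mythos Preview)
Your argument for (2) is correct and is essentially the paper's own: both compute $2\Delta$ for the two-block cocharacter with $\beta^{(1)}=\alpha_i$, obtaining $2\Delta(\la_i)=2+(\mathbf w-\mathbf C\mathbf v)_i$, whence dominance under the good hypothesis.

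For (1) your route diverges from the paper's, and the merging step has a gap. Kac's Proposition~5.1(b) is a root-string statement through a \emph{real} root; it yields $\tilde\beta^{(k)}+\tilde\beta^{(l)}\in\Delta_+$ only when at least one of the two is real. When $Q$ is affine the two-imaginary case genuinely arises: the distinguished part $\tilde\beta^{(k_0)}$ (the one with $\infty$-component $1$) can already be imaginary in $\tilde Q$, and for $\tilde\beta^{(l)}=(n\delta,0)$ one computes
\[
c_{k_0,l}=-\langle\tilde\beta^{(k_0)},\tilde{\mathbf C}(n\delta,0)\rangle
= n\langle\mathbf w,\delta\rangle>0,
\]
contradicting your claim that such pairings vanish. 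So a merger of two imaginary roots is forced, and the cited proposition does not supply it. (A smaller slip: merging reduces $\sum_{k<l}c_{kl}$ by $c_{kl}$, not by exactly one; your conclusion that CB-failure persists survives since $c_{kl}\ge 1$.) The paper avoids the hyperbolic root system of $\tilde Q$ altogether: it rewrites the Crawley--Boevey inequality as \eqref{eq:11}, absorbs any $n\delta$ summand into $\mathbf v^{(0)}$ (the gain $2\langle\delta,\mathbf w\rangle\ge 2$ compensating the loss of $2$ on the right), and thereby reduces everything to the single inequality $\langle\beta,\mathbf w-\mathbf C\mathbf v\rangle\ge -1$ for positive \emph{real} roots $\beta$ of $Q$, which it then reads off from \eqref{eq:12} via one explicit two-block cocharacter.
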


\begin{proof}
    Crawley-Boevey's criterion of the complete intersection property
    of $\bmu^{-1}(0)$ can be modified by the trick adding a new vertex
    $\infty$. Then instead of \eqref{eq:8}, we have
%$\bmu^{-1}(0)$ is complete intersection of $\dim = \dim
%\bM - \dim G$ if and only if
\begin{equation}\label{eq:11}
    \langle \mathbf v, 2\mathbf w - \mathbf C\mathbf v\rangle
    \ge 
    \langle \mathbf v^{(0)}, 2\mathbf w - \mathbf C\mathbf v^{(0)}\rangle
    + \sum_k (2 - \langle\beta^{(k)}, \mathbf C\beta^{(k)}\rangle)
\end{equation}
for any decomposition $\mathbf v = \mathbf v^{(0)} + \sum_k
\beta^{(k)}$ into positive vectors such that $\beta^{(k)}$ is
nonzero. ($\dim V^{(0)}$ could be zero.) (See
\cite[Th.~2.15]{Na-branching} for a similar deduction.) It is also
equivalent to the inequality holds for $\mathbf v = \mathbf v^{(0)} +
\sum_k \beta^{(k)}$ such that $\mathbf w - \mathbf C\mathbf v^{(0)}$
is a weight of the highest weight representation $V(\mathbf w)$ with
highest weight $\mathbf w$, and $\beta^{(k)}$ is a positive root.

If $Q$ is of finite type $ADE$, $\langle\beta^{(k)},\mathbf
C\beta^{(k)}\rangle = 2$ for any $k$. Therefore \eqref{eq:11} is
equivalent to
\begin{equation}\label{eq:13}
    \begin{NB}
    \langle \mathbf v, 2\mathbf w - \mathbf C\mathbf v\rangle
    \ge 
    \langle \mathbf v^{(0)}, 2\mathbf w - \mathbf C\mathbf v^{(0)}\rangle
    \end{NB}
    \langle \beta, \mathbf w - \mathbf C\mathbf v\rangle
    \ge -\frac12
    \langle \beta, \mathbf C\beta\rangle
\end{equation}
for any $\beta\le \mathbf v$ such that $\mathbf w - \mathbf C (\mathbf
v-\beta)$ is a weight of $V(\mathbf w)$.
\begin{NB}
    $\beta = \mathbf v - \mathbf v^{(0)}$
\end{NB}

\begin{NB}
Equivalently
\begin{equation*}
    \langle \sum_k \beta^{(k)},\mathbf w - \mathbf C\mathbf v\rangle
    \ge - \frac12 \langle \sum \beta^{(k)},\mathbf C\sum \beta^{(k)}\rangle
    + \sum_k (1 - \frac12 \langle\beta^{(k)}, \mathbf C\beta^{(k)}\rangle)
\end{equation*}
or
\begin{equation*}
    \langle \sum_k \beta^{(k)},\mathbf w - \mathbf C\mathbf v^{(0)}\rangle
    \ge \sum_{k < l} \langle \beta^{(k)},\mathbf C \beta^{(l)}\rangle
    + \# \{ \beta^{(k)}\}.
\end{equation*}
\end{NB}%
Taking a positive root $\beta$, we have $\langle\beta,\mathbf w-\mathbf C\mathbf v\rangle \ge -1$.
Next suppose $\beta = \beta^{(1)} + \beta^{(2)} + \cdots$ such that
$\beta^{(k)}$ is a positive root and $\beta^{(k)} + \beta^{(l)}$ is
not a root. Then $\langle \beta^{(k)},\mathbf C\beta^{(l)}\rangle \ge
0$ as in \subsecref{sec:quiv-gauge-theor}. Therefore $\langle \beta,
\mathbf C\beta\rangle \ge 2 \# \{ \beta^{(k)}\}$. Therefore if
$\langle \beta^{(k)}, \mathbf w - \mathbf C\mathbf v\rangle \ge -1$
for any $k$, we have $\langle \beta, \mathbf w - \mathbf C\mathbf
v\rangle \ge -\# \{ \beta^{(k)}\} \ge -\frac12 \langle \beta, \mathbf
C\beta\rangle$. Hence it is enough to suppose \eqref{eq:13} is true
for an arbitrary positive root $\beta$.

Suppose $Q$ is of affine type. If $\beta^{(k)}=\delta$ in
\eqref{eq:11}, we absorb it into $\mathbf v^{(0)}$. Since $\mathbf
C\delta = 0$, the first term of the right hand side of \eqref{eq:11}
increases $2\langle\delta,\mathbf w\rangle$, which is $\ge 2$. On the
other hand, the second term decreases by $2$. Therefore it is enough
to assume \eqref{eq:11} when all $\beta^{(k)}$ is a real root. Then
the same argument as above shows that it is enough to assume
$\langle\beta,\mathbf w-\mathbf C\mathbf v\rangle \ge -1$ for an
arbitrary positive real root $\beta$.

On the other hand, let us take $\la\colon\CC^\times\to T = \prod_i
T(V_i)$ as in \eqref{eq:10}. Then if $\la(t)$ acts on $t^{\la^{(k)}}$
on $V^{(k)}$,
\begin{equation}\label{eq:12}
    \begin{split}
    2\Delta(\la) &= 
    \begin{aligned}[t]
       \sum_{k<l} |\la^{(k)} - \la^{(l)}| \left( \sum_i
    -2\beta^{(k)}_i \beta^{(l)}_i
    + \sum_{i,j} a_{ij} \beta^{(k)}_i \beta^{(l)}_j
    \right) &
    \\
    %& %\qquad%\qquad 
    + \sum_k |\la^{(k)}| \sum_i \beta^{(k)}_i \dim W_i &
    \end{aligned}
\\
    & = -\sum_{k<l} |\la^{(k)} - \la^{(l)}| \langle \beta^{(k)}, \mathbf C
    \beta^{(l)}\rangle
    + \sum_k |\la^{(k)}|\langle\beta^{(k)},\mathbf w\rangle
    \end{split}
\end{equation}
with $\beta^{(k)} = \dim V^{(k)}$.
\begin{NB}
We take $\beta^{(1)} = \alpha_i$, $\beta^{(2)} = \mathbf v -
\alpha_i$. Then
\begin{equation*}
    2\Delta(\la) = - |\la^{(1)}-\la^{(2)}|
    \langle\alpha_i,\mathbf C(\mathbf v-\alpha_i)\rangle
    + |\la^{(1)}| \langle\alpha_i,\mathbf w\rangle
    + |\la^{(2)}| \langle\mathbf v - \alpha_i,\mathbf w\rangle.
\end{equation*}
More generally, suppose we have only two summands $\beta^{(1)}$,
$\beta^{(2)} = \mathbf v - \beta^{(1)}$. Then
\begin{equation*}
    2\Delta(\la) = - |\la^{(1)}-\la^{(2)}|
    \langle\beta^{(1)},\mathbf C(\mathbf v-\beta^{(1)})\rangle
    + |\la^{(1)}| \langle\beta^{(1)},\mathbf w\rangle
    + |\la^{(2)}| \langle\mathbf v - \beta^{(1)},\mathbf w\rangle.
\end{equation*}

Suppose $\la^{(1)} \ge \la^{(2)}\ge 0$:
\begin{equation*}
    \begin{split}
    2\Delta(\la) &= - (\la^{(1)}-\la^{(2)})
    \langle\beta^{(1)},\mathbf C(\mathbf v-\beta^{(1)})\rangle
    + \la^{(1)} \langle\beta^{(1)},\mathbf w\rangle
    + \la^{(2)} \langle\mathbf v - \beta^{(1)},\mathbf w\rangle
\\
   &=  \la^{(1)} \langle\beta^{(1)},\mathbf w 
   - \mathbf C(\mathbf v-\beta^{(1)})\rangle
   + \la^{(2)} \langle\mathbf w + \mathbf C\beta^{(1)},
   \mathbf v-\beta^{(1)}\rangle
\\
   &=  \la^{(1)} \langle\beta^{(1)},\mathbf w 
   - \mathbf C(\mathbf v-\beta^{(1)})\rangle
   + \la^{(2)} \langle\mathbf w - \mathbf C(\mathbf v - \beta^{(2)}),
   \beta^{(2)}\rangle.
    \end{split}
\end{equation*}
Therefore $\langle \beta^{(1)},\mathbf w 
   - \mathbf C(\mathbf v-\beta^{(1)})\rangle\ge 1$, i.e., 
\(
   \langle \beta^{(1)},\mathbf w 
   - \mathbf C\mathbf v\rangle\ge 
   1 - \langle \beta^{(1)}, \mathbf C\beta^{(1)}\rangle.
\)

Suppose $\la^{(1)} \ge 0\ge \la^{(2)}$:
\begin{equation*}
    \begin{split}
    2\Delta(\la) &= - (\la^{(1)}-\la^{(2)})
    \langle\beta^{(1)},\mathbf C(\mathbf v-\beta^{(1)})\rangle
    + \la^{(1)} \langle\beta^{(1)},\mathbf w\rangle
    - \la^{(2)} \langle\mathbf v - \beta^{(1)},\mathbf w\rangle
\\
   &=  \la^{(1)} \langle\beta^{(1)},\mathbf w 
   - \mathbf C(\mathbf v-\beta^{(1)})\rangle
   - \la^{(2)} \langle\mathbf w - \mathbf C\beta^{(1)},
   \mathbf v-\beta^{(1)}\rangle
\\
   &=  \la^{(1)} \langle\beta^{(1)},\mathbf w 
   - \mathbf C(\mathbf v-\beta^{(1)})\rangle
   - \la^{(2)} \langle\mathbf w - \mathbf C(\mathbf v - \beta^{(2)}),
   \beta^{(2)}\rangle.
    \end{split}
\end{equation*}
Suppose $0\ge \la^{(1)} \ge \la^{(2)}$:
\begin{equation*}
    \begin{split}
    2\Delta(\la) &= - (\la^{(1)}-\la^{(2)})
    \langle\beta^{(1)},\mathbf C(\mathbf v-\beta^{(1)})\rangle
    - \la^{(1)} \langle\beta^{(1)},\mathbf w\rangle
    - \la^{(2)} \langle\mathbf v - \beta^{(1)},\mathbf w\rangle
\\
   &=  -\la^{(1)} \langle\beta^{(1)},\mathbf w 
   + \mathbf C(\mathbf v-\beta^{(1)})\rangle
   - \la^{(2)} \langle\mathbf w - \mathbf C\beta^{(1)},
   \mathbf v-\beta^{(1)}\rangle.
    \end{split}
\end{equation*}
\end{NB}%
We take $\beta^{(1)} = \beta$ a positive real root, and $\beta^{(2)} =
\mathbf v - \beta$. Furthermore assume $\la^{(1)}\ge \la^{(2)}\ge
0$. Then
\begin{equation*}
    2\Delta(\la) = 
    \la^{(1)} \langle\beta,\mathbf w 
    - \mathbf C(\mathbf v-\beta)\rangle
    + \la^{(2)} \langle\mathbf w + \mathbf C\beta,
    \mathbf v-\beta\rangle.
\end{equation*}
For good or ugly cases, as $2\Delta(\la)\ge 1$ for any $\la$, we have
$\langle\beta,\mathbf w - \mathbf C(\mathbf v-\beta)\rangle\ge 1$,
i.e., $\langle\beta,\mathbf w - \mathbf C\mathbf v\rangle\ge -1$. For
good cases, $\langle\beta,\mathbf w - \mathbf C\mathbf v\rangle\ge 0$.
\end{proof}

The converses of (1),(2) are probably true.

Suppose $\Hyp(\bM)\tslabar G$ is good. By \propref{prop:qWgood}(2),
$\mathbf w-\mathbf C\mathbf v$ is dominant, hence $\mathcal M_C$ is
expected to be a slice in the affine Grassmannian when $Q$ is finite
type, as we explained in the beginning of this subsection. When $Q$ is
affine type, we still need to solve a puzzle in
\cite[3(ii)]{2015arXiv150303676N}, but is the Uhlenbeck partial
compactification of an instanton moduli space on $\RR^4/(\ZZ/\ell\ZZ)$
as a first approximation. Then $I\!H^*(\mathcal M_C)$ is a weight
space of a finite dimensional irreducible representation of the Lie
algebra $\g$ corresponding to $Q$ by geometric Satake
correspondence. This is so when $Q$ is of finite type. If $Q$ is
affine type, this is the statement of a conjecture in
\cite{braverman-2007}, geometric Satake correspondence for the affine
Lie algebra $\g_\aff$.

On the other hand, $\mathcal M_H$ is a quiver variety. In particular,
$\mathcal M_H$ has a symplectic resolution $\tilde{\mathcal
  M}_H\to\mathcal M_H$. It is conjectured that $H\!P_0(\mathcal M_H)
\cong H^{\dim \tilde{\mathcal M}_H}(\tilde{\mathcal M}_H)$ in
\cite{MR3217666}. The right hand side is a weight space of a finite
dimensional or integrable irreducible representation of $\g$ or
$\g_\aff$ by \cite{Na-quiver}. Therefore, modulo a conjecture in
\cite{MR3217666}, we have $H\! P_0(\mathcal M_H) \cong I\!H^*(\mathcal
M_C)$, the first isomorphism in Question~\ref{q:HPHH}. We also expect
$H\!P_0(\mathcal M_C)\cong I\!H^*(\mathcal M_H)$, where the right hand
side is the multiplicity of the finite dimensional (resp.\ integrable)
irreducible representation $L(\mathbf w - \mathbf C\mathbf v)$ of $\g$
(resp.\ $\g_\aff$) in the (resp.\ affine) Yangian $Y(\g)$ (resp.\
$Y(\g_\aff)$) \cite[\S15]{Na-qaff}. It is interesting to study
$HP_0(\mathcal M_C)$.

\subsection{$\SU(2)$ gauge theories with fundamental matters}\label{ex:SL2}
Consider $(G,\bN) = (\SL(2), (\CC^2)^{\oplus N})$ with $\bM =
\bN\oplus\bN^*$ for $N=0,1,2,\dots$.
In this case, the Coulomb branch is a complex surface
\begin{equation}\label{eq:2}
    y^2 = x^2 z - z^{N-1} \quad \text{if $N\ge 1$}, \quad
    y^2 = x^2 z + x \quad \text{if $N=0$}.
\end{equation}
See \cite{MR1490862}. This is of cotangent type, and the definition in
\cite{BFN} reproduces the above at least for $N\neq 1,2,3$. The degrees
for the $\CC^\times$-action are
\(
   \deg x = N-2,
\)
\(
   \deg y = N-1,
\)
\(
   \deg z = 2.
\)

\begin{NB}
    If $N\ge 3$ (good), we have
    \(
        \dim I\!H^*(\mathcal M_C) = 1.
    \)
\end{NB}

Let us study the Higgs branch $\bM\tslash G$. We use the standard
notation for quiver varieties: $i\colon \CC^N\to \CC^2$, $j\colon
\CC^2\to \CC^N$ with the $\SL(2)$-action by $g\cdot (i,j) = (gi,
jg^{-1})$. The moment map $\bmu(i,j)$ is the trace-free part of
$ij$. If $N=0$, we have $\bM\tslash G = \{ 0\}$ by a trivial
reason. If $N=1$, it is not trivial, but not difficult to check the
following:
\begin{Lemma}
    Suppose $N=1$.

    Then $\bmu=0$ implies either $i=0$ or $j=0$. In particular, the
    only closed $\SL(2)$-orbit in $\bmu^{-1}(0)$ is just
    $0$. Therefore $\bM\tslash G = \{ 0\}$.
\end{Lemma}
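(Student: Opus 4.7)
The plan is to rewrite the moment map equation in coordinates and exploit a rank mismatch. Parametrize $i=(a,b)^T\in\CC^2$ and $j=(c,d)\in(\CC^2)^*$, so that $ij\in\End(\CC^2)$ is the matrix with entries $ac,ad,bc,bd$, which has trace $ac+bd=ji$ and rank at most $1$. The equation $\bmu(i,j)=0$ then reads $ij=\tfrac12(ji)\,\mathrm{Id}$.

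The key observation is a rank mismatch: the right-hand side is a scalar multiple of $\mathrm{Id}$ and hence has rank $0$ or $2$, whereas $ij$ has rank at most $1$. So both sides must vanish; in particular $ij=0$ as a $2\times 2$ matrix. A rank-one tensor $i\otimes j$ vanishes only if $i=0$ or $j=0$, proving the first assertion. Alternatively, one just expands the four scalar equations $ad=0$, $bc=0$, $ac=bd$, $ac+bd=0$ and checks by cases on which of $a,b,c,d$ vanish that one of $i,j$ is forced to be $0$.

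For the closed-orbit statement, it then follows that, set-theoretically, $\bmu^{-1}(0) = \{(i,0)\mid i\in\CC^2\}\cup\{(0,j)\mid j\in(\CC^2)^*\}$, and $\SL(2)$ acts on each of the two components by its standard representation (or its dual). The $\SL(2)$-orbits on $\CC^2$ are $\{0\}$ and $\CC^2\setminus\{0\}$, and the origin lies in the Zariski closure of the latter — as one sees by letting $t\to 0$ along $\operatorname{diag}(t,t^{-1})\cdot i$ for any nonzero $i$. Therefore the unique closed $\SL(2)$-orbit in $\bmu^{-1}(0)$ is $\{0\}$, and $\bM\tslash G=\{0\}$.

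There is no substantive obstacle; the main things to be careful about are: (i) the rank argument does rule out all nontrivial solutions, and (ii) the union decomposition of $\bmu^{-1}(0)$ above is exhaustive (which is precisely the content of the rank analysis), so no stray components of $\bmu^{-1}(0)$ are overlooked when one passes to the GIT quotient.
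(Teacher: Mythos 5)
The paper states this Lemma without proof (it is introduced only with ``not difficult to check''), so there is no argument of the author's to compare against; your proposal supplies the missing verification, and it is correct. The rank argument is exactly the right observation: $\bmu=0$ forces $ij=\tfrac12\tr(ij)\,\id$, and since $ij=i\otimes j$ has rank at most $1$ while a nonzero multiple of $\id$ has rank $2$, both sides vanish, whence $i=0$ or $j=0$; the passage to closed orbits and to $\bM\tslash G=\{0\}$ is then standard GIT.

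One small imprecision: your claim that $\operatorname{diag}(t,t^{-1})\cdot i\to 0$ as $t\to 0$ for \emph{any} nonzero $i$ is false as literally stated (take $i=(0,1)^T$, for which the second coordinate blows up). This does not affect the conclusion, because $\SL(2)$ acts transitively on $\CC^2\setminus\{0\}$, so it suffices to exhibit $0$ in the closure of the orbit of a single representative, say $(1,0)^T$, where your one-parameter subgroup does work; equivalently, conjugate the one-parameter subgroup by an element moving $i$ to $(1,0)^T$. With that cosmetic fix the proof is complete.
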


Therefore $\bmu^{-1}(0)$ is not complete intersection of $\dim = \dim
\bM - \dim G = 1$. The degree of $x$ is $-1$, hence $\mathcal M_C$ is
not conical.

On the other hand, since the Higgs branch has only single point,
$\mathcal M_C(G,\bN)$ should have only one stratum, i.e., it is a
nonsingular symplectic manifold. It is not difficult to check that
\eqref{eq:2} is indeed so when $N=0$, $1$. Therefore
Question~\ref{q:smooth} is affirmative.

Next consider the case $N\ge 2$. 

\begin{Proposition}
%\begin{Lemma}
    Assume $N\ge 2$.

    \textup{(1)} $\bmu^{-1}(0)$ is a complete intersection in $\bM$ of
    $\dim = \dim \bM - 3 = 4N-3$. It is irreducible if $N\ge 3$ and has two
    irreducible components if $N=2$.

    \textup{(2)} $\bM\tslash G = \bmu^{-1}(0)\dslash G$ has
    singularity only at $0$. $\bM\tslash G\setminus\{0\}$ is
    irreducible if $N > 2$ and has two irreducible components if $N=2$.
\end{Proposition}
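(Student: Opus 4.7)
The key function is $\lambda = \tfrac12 \tr(ij) \in \CC$, which on $\bmu^{-1}(0)$ specifies the scalar such that $ij = \lambda I_2$. I stratify $\bmu^{-1}(0)$ into the open locus $U = \{\lambda \neq 0\}$ and the closed locus $Z = \{ij = 0\}$. On $U$ one necessarily has $\rank i = 2$, and for each $(\lambda, i)$ with $\lambda \in \CC^\times$ the fiber $\{j : ij = \lambda I_2\}$ is a nonempty affine coset of $\Hom(\CC^2, \ker i)$; hence $U$ is an affine bundle of rank $2(N-2)$ over $\CC^\times \times \{i : \rank i = 2\}$, hence irreducible of dimension $4N - 3$. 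Stratifying $Z$ by $r = \rank i \in \{0,1,2\}$, the stratum has dimension $r(2+N-r) + 2(N-r)$, which evaluates to $2N$, $3N-1$, $2N$ for $r = 0, 1, 2$. Since every component of $\bmu^{-1}(0)$ has dimension $\geq \dim \bM - 3 = 4N - 3$, this forces $\dim \bmu^{-1}(0) = 4N - 3$, establishing the complete intersection property in both regimes.

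For irreducibility in (1): when $N \geq 3$ one has $\dim Z \leq 3N - 1 < 4N - 3$, so $\bmu^{-1}(0) = \overline U$ is irreducible. When $N = 2$, $\dim Z = 5 = 4N - 3$, and I claim $\overline{Z_1}$ with $Z_1 = \{\rank i = 1,\, ij = 0\}$ is a second irreducible component. To see this, I choose $(i_0, j_0) \in Z_1$ with $\rank i_0 = \rank j_0 = 1$ and $\im j_0 = \ker i_0$, and compute directly that $\ker d\bmu|_{(i_0, j_0)}$ has dimension $5$, so $d\bmu$ is surjective onto $\mathfrak{sl}_2$ and $(i_0, j_0)$ is a smooth point of $\bmu^{-1}(0)$. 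Since $\lambda \equiv 0$ on $Z_1$ and a smooth point lies on a unique irreducible component, the component through $(i_0, j_0)$ is $\overline{Z_1} \subset \{\lambda = 0\}$; as $\lambda$ is not identically zero on $\overline U$, the two closures are distinct, giving exactly two components.

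For (2), the stabilizer of $(i,j)$ in $SL(2)$ is trivial whenever $\rank i = 2$ or $\rank j = 2$, since then $g$ must act as the identity on $\CC^2$. For $(i,j)$ with $\rank i \leq 1$ and $\rank j \leq 1$, writing $i = u \otimes v$ and $j = u' \otimes v'$, the stabilizer is trivial precisely when $v'(u) \neq 0$; otherwise a one-parameter subgroup $\rho(t) = \mathrm{diag}(t, t^{-1})$ in a basis with first vector $u$ shows $\rho(t) \cdot (i,j) \to (0,0)$, so the orbit is not closed. Hence the only closed orbit with nontrivial stabilizer is $\{(0,0)\}$, so $\bM \tslash G$ is smooth away from the origin. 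For $N > 2$, irreducibility of $\bmu^{-1}(0)$ descends to the quotient. For $N = 2$, the two components of $\bmu^{-1}(0)$ map to closed irreducible subvarieties of $\bM \tslash G$ distinguished by the $G$-invariant $ji \in \End(\CC^2)$: on the image of $\overline U$ one has $ji = \lambda I_2$ (generically rank $2$), while on the image of $\overline{Z_1}$ one has $ji = v'(u)\, u' \otimes v$ (generically rank $1$), so they form distinct components of $\bM \tslash G \setminus \{0\}$.

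The main obstacle is the $N = 2$ analysis, where both halves of the proposition require producing the right invariant to separate two irreducible subvarieties of equal dimension: the scalar $\lambda$ for $\bmu^{-1}(0)$, and the rank of $ji$ for the quotient. Everything else reduces to elementary dimension counting once the stratification by $\lambda$ is in place, together with the standard fact that closed orbits with reductive stabilizer control the singular locus of an affine quotient.
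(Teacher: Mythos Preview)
Your argument is correct and follows essentially the same strategy as the paper: stratify by the scalar $\lambda$ (the paper's $\zeta$) with $ij=\lambda I_2$, show the open piece is irreducible of dimension $4N-3$, bound the dimension of the complement, and for (2) analyze stabilizers and closed orbits via the condition $\Ima i\cap\Ker j=\{0\}$ (your $v'(u)\neq 0$). One arithmetic slip to correct: your formula $r(2+N-r)+2(N-r)$ gives $4N-4$ at $r=2$, not $2N$; this is harmless since $4N-4<4N-3$ still holds, so the complete intersection and irreducibility conclusions are unaffected. The paper's open locus is the slightly larger set $\{\rank i=2\ \text{or}\ \rank j=2\}$, and it identifies the second component for $N=2$ by a direct dimension count of the locus where both ranks are $\le 1$ (dimension $2N+1$) rather than by exhibiting a smooth point; your smooth-point check and your use of $ji$ to separate the two components in the quotient are a bit more explicit than the paper, but these are minor variations on the same proof.
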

%\end{Lemma}

Since the gauge theory $\Hyp(\bM)\tslabar G$ is good or ugly if and
only if $N\ge 3$, so the answer to Question~\ref{q:complete} is yes,
but $N=2$ is a counter-example to its converse.

\begin{proof}
    (1) From $\bmu(i,j) = 0$, we have $ij=\zeta\id$ for some
    $\zeta\in\CC$. Suppose $\zeta\neq 0$. A standard argument shows
    that the stabilizer is trivial. Therefore the differential of
    $\bmu$ is surjective, hence $\bmu^{-1}(0)$ is a complete
    intersection.
    If we take the quotient of $ij=\zeta\id$ by $\GL(2)$, it is a
    quiver variety $\mathfrak M_\zeta$ and is smooth and irreducible
    and forms a smooth family over $\zeta\neq 0$. (In fact, it is a
    semisimple coadjoint orbit in $\gl(N)$ with eigenvalues $\la$ with
    multiplicity $2$ and $0$ with multiplicity $N-2$.)

    If $i$ is surjective, we can form the (GIT) quotient even across
    $\zeta=0$ and get a smooth family over $\zeta\in\CC$. The same is
    true if $j$ is injective. Therefore on the open subset either of
    $i$ or $j$ is rank $2$, $\bmu^{-1}(0)$ is a smooth irreducible
    \begin{NB}
        (as connected)
    \end{NB}%
    variety of $\dim = 4N - 3$.

    Therefore consider the case when both $i$, $j$ have rank $\le
    1$. This can happen only when $\zeta = 0$. Let us write
    \begin{equation*}
        i =
    \begin{pmatrix}
        i_{11} & i_{12} & \cdots & i_{1N} \\
        i_{21} & i_{22} & \cdots & i_{2N}
    \end{pmatrix}, \quad
    j = 
    \begin{pmatrix}
        j_{11} & j_{12} \\
        \vdots & \vdots \\
        j_{N1} & i_{N2}
    \end{pmatrix}.
    \end{equation*}
    Since $i$ is rank $1$, the upper row and the lower row are the
    same up to constant multiple. The same is true for columns of
    $j$. Then the equation $ij=0$ is a single scalar
    equation. Therefore it forms an irreducible variety of dimension
    $2N+1$. Since $2N+1 \le 4N-3$ and the equality holds if and only
    if $N=2$, the assertion follows.
    \begin{NB}
        Since it is defined by $3$ equations, we cannot have an
        irreducible component of $\dim < \dim M - 3$.
    \end{NB}%

    (2) We suppose $(i,j)\in\bmu^{-1}(0)$ corresponds to a singular
    point. We assume that it has a closed $\SL(2)$-orbit. By the above
    argument, we may suppose both $i$, $j$ have rank $1$. If $\Ima
    i\cap \Ker j = \{ 0\}$, the stabilizer is trivial.
    \begin{NB}
        It is also true that it has a closed orbit.
    \end{NB}%
    Therefore it gives a smooth point in $\bM\tslash G$. If $\Ima i =
    \Ker j$, we can find a one-parameter subgroup $\la\colon
    \CC^\times\to \SL(2)$ such that $\la\cdot (i,j)\to
    (0,0)$. Therefore it cannot have a closed orbit.
    
    There are indeed points $\Ima i \cap \Ker j = \{ 0\}$ in
    $\bmu^{-1}(0)$. They form a smooth variety of dimension
    $2N-2$. When $N=2$, it gives the second irreducible component.
\end{proof}

This example shows that a stratum $(\bM\tslash G)_{(\widehat G)}$ may
not be connected in general. In fact, we have $\widehat G = \{e\}$ in
the above example, and $(\bM\tslash G)_{(\widehat G)}$ is an open
stratum and has two connected components.
Therefore $\bM\tslash G$ actually has three strata.
It is interesting to note that \eqref{eq:2} with $N=2$ also has three
strata, a smooth locus and two singular points $x=\pm 1$,
$y=z=0$. Thus answers to Question~\ref{q:leaves}(1) and the bijection
part of (3) are yes even in this case.

(2) and the second half of (3) are \emph{not} clear as stated: The
transversal slice $T^\perp/\widehat G$, for $\mathcal M_H$ at points
in either of two components of $(\bM\tslash G)_{\{e\}}$, is
$\{ 0\}/\{e\}$. It is the Higgs branch of the trivial gauge theory
$\Hyp(\{0\})\tslabar \{e\}$. The corresponding Coulomb branch is just
a single point, and strata for two singular points may be identified
with this Coulomb branch.
However it is not clear (at least to the author) whether we can
naturally view each component of $(\bM\tslash G)_{\{e\}}$ is the Higgs
branch of a gauge theory $\Hyp(\bM')\tslabar G'$ for some
$(G',\bM')$. Say, are $(G',\bM')$ different for two components ?
Similarly it is not clear whether the transversal slices to singular
points $x=\pm 1$, $y=z=0$ in $\mathcal M_C$ (both $A_1$ type) can be
naturally identified with the Coulomb branch of a gauge theory
$\Hyp(\bM')\tslabar G'$.
It is desirable to understand this phenomenon better.

For Question~\ref{q:HPHH}, $H\!P_0(\mathcal M_C)$, $I\!H^*(\mathcal
M_C)$ are easy to compute as $\mathcal M_C$ is of type $D_N$
singularity. We do not know $I\!H^*(\mathcal M_H)$, $H\!P_0(\mathcal
M_H)$.

\subsection*{Acknowledgements}

The author has been taught many things from various people at various
places in the world while he was explaining
\cite{2015arXiv150303676N,BFN}. Besides those who are already
mentioned, he thanks
Mathew Bullimore,
Tudor Dimofte,
Davide Gaiotto,
Amihay Hanany,
Joel Kamnitzer,
Nicholas Proudfoot,
Bal\'as Szenrd\"oi,
and
Yuuya Takayama.
He also thanks Alexander Braverman
and
Michael Finkelberg
for the collaboration \cite{BFN}.

This research is supported by JSPS Kakenhi Grant Numbers 
%22244003, % Moriwaki's A
23224002, % Fukaya's S
23340005, % my B
24224001, % Saito's S
25220701. % Mukai's S

\bibliographystyle{myamsalpha}
\bibliography{nakajima,mybib,coulomb}

\newcommand{\etalchar}[1]{$^{#1}$}
\def\cprime{$'$} \def\cprime{$'$} \def\cprime{$'$} \def\cprime{$'$}
  \def\cprime{$'$}
\providecommand{\bysame}{\leavevmode\hbox to3em{\hrulefill}\thinspace}
\providecommand{\MR}{\relax\ifhmode\unskip\space\fi MR }
% \MRhref is called by the amsart/book/proc definition of \MR.
\providecommand{\MRhref}[2]{%
  \href{http://www.ams.org/mathscinet-getitem?mr=#1}{#2}
}
\providecommand{\href}[2]{#2}
\begin{thebibliography}{dBHOO97}

\bibitem[BDG15]{2015arXiv150304817B}
M.~{Bullimore}, T.~{Dimofte}, and D.~{Gaiotto}, \emph{{The Coulomb Branch of 3d
  $\mathcal N=4$ Theories}}, ArXiv e-prints (2015),
  \href{http://arxiv.org/abs/1503.04817}{{\ttfamily arXiv:1503.04817
  [hep-th]}}.

\bibitem[BF10]{braverman-2007}
A.~Braverman and M.~Finkelberg, \emph{Pursuing the double affine {G}rassmannian
  {I}: transversal slices via instantons on ${A}_k$-singularities}, Duke Math.
  J. \textbf{152} (2010), no.~2, 175--206. \MR{MR2656088 (2011i:14024)}

\bibitem[BFN14]{2014arXiv1406.2381B}
A.~{Braverman}, M.~{Finkelberg}, and H.~{Nakajima}, \emph{{Instanton moduli
  spaces and $\mathscr W$-algebras}}, ArXiv e-prints (2014),
  \href{http://arxiv.org/abs/1406.2381}{{\ttfamily arXiv:1406.2381 [math.QA]}}.

\bibitem[BFN15]{BFN}
A.~Braverman, M.~Finkelberg, and H.~Nakajima, in preparation, 2015.

\bibitem[BLPW14]{2014arXiv1407.0964B}
T.~{Braden}, A.~{Licata}, N.~{Proudfoot}, and B.~{Webster},
  \emph{{Quantizations of conical symplectic resolutions II: category $\mathcal
  O$ and symplectic duality}}, ArXiv e-prints (2014),
  \href{http://arxiv.org/abs/1407.0964}{{\ttfamily arXiv:1407.0964 [math.RT]}}.

\bibitem[CB01]{CB}
W.~Crawley-Boevey, \emph{Geometry of the moment map for representations of
  quivers}, Compositio Math. \textbf{126} (2001), no.~3, 257--293.
  \MR{MR1834739 (2002g:16021)}

\bibitem[CB03]{CB:normal}
\bysame, \emph{Normality of {M}arsden-{W}einstein reductions for
  representations of quivers}, Math. Ann. \textbf{325} (2003), no.~1, 55--79.
  \MR{MR1957264 (2004c:16017)}

\bibitem[CH08]{MR2395473}
B.~Charbonneau and J.~Hurtubise, \emph{Calorons, {N}ahm's equations on {$S^1$}
  and bundles over {$\Bbb P^1\times\Bbb P^1$}}, Comm. Math. Phys. \textbf{280}
  (2008), no.~2, 315--349. \MR{2395473 (2009f:53032)}

\bibitem[CH10]{MR2681686}
\bysame, \emph{The {N}ahm transform for calorons}, The many facets of geometry,
  Oxford Univ. Press, Oxford, 2010, pp.~34--70. \MR{2681686 (2011g:53044)}

\bibitem[CK98]{MR1636383}
S.~A. Cherkis and A.~Kapustin, \emph{Singular monopoles and supersymmetric
  gauge theories in three dimensions}, Nuclear Phys. B \textbf{525} (1998),
  no.~1-2, 215--234. \MR{1636383 (99f:81193)}

\bibitem[dBHO{\etalchar{+}}97]{MR1454292}
J.~de~Boer, K.~Hori, H.~Ooguri, Y.~Oz, and Z.~Yin, \emph{Mirror symmetry in
  three-dimensional gauge theories, {${\rm SL}(2,\bold Z)$} and {D}-brane
  moduli spaces}, Nuclear Phys. B \textbf{493} (1997), no.~1-2, 148--176.
  \MR{1454292 (99c:81231)}

\bibitem[dBHOO97]{MR1454291}
J.~de~Boer, K.~Hori, H.~Ooguri, and Y.~Oz, \emph{Mirror symmetry in
  three-dimensional gauge theories, quivers and {D}-branes}, Nuclear Phys. B
  \textbf{493} (1997), no.~1-2, 101--147. \MR{1454291 (99c:81230)}

\bibitem[Don84]{MR769355}
S.~K. Donaldson, \emph{Nahm's equations and the classification of monopoles},
  Comm. Math. Phys. \textbf{96} (1984), no.~3, 387--407. \MR{769355
  (86c:58039)}

\bibitem[ES10]{MR2729282}
P.~Etingof and T.~Schedler, \emph{Poisson traces and {$D$}-modules on {P}oisson
  varieties}, Geom. Funct. Anal. \textbf{20} (2010), no.~4, 958--987, With an
  appendix by Ivan Losev. \MR{2729282 (2011j:17039)}

\bibitem[ES14]{MR3217666}
\bysame, \emph{Poisson traces for symmetric powers of symplectic varieties},
  Int. Math. Res. Not. IMRN (2014), no.~12, 3396--3438. \MR{3217666}

\bibitem[Fre82]{MR675108}
I.~B. Frenkel, \emph{Representations of affine {L}ie algebras, {H}ecke modular
  forms and {K}orteweg-de~{V}ries type equations}, Lie algebras and related
  topics ({N}ew {B}runswick, {N}.{J}., 1981), Lecture Notes in Math., vol. 933,
  Springer, Berlin, 1982, pp.~71--110. \MR{MR675108 (84f:17006)}

\bibitem[HKLR87]{HKLR}
N.~J. Hitchin, A.~Karlhede, U.~Lindstr{\"o}m, and M.~Ro{\v{c}}ek,
  \emph{Hyper-{K}\"ahler metrics and supersymmetry}, Comm. Math. Phys.
  \textbf{108} (1987), no.~4, 535--589. \MR{877637 (88g:53048)}

\bibitem[Hur89]{MR987771}
J.~Hurtubise, \emph{The classification of monopoles for the classical groups},
  Comm. Math. Phys. \textbf{120} (1989), no.~4, 613--641. \MR{987771
  (90c:53182)}

\bibitem[HW97]{MR1451054}
A.~Hanany and E.~Witten, \emph{Type {IIB} superstrings, {BPS} monopoles, and
  three-dimensional gauge dynamics}, Nuclear Phys. B \textbf{492} (1997),
  no.~1-2, 152--190. \MR{1451054 (98h:81096)}

\bibitem[Jar98]{MR1625475}
S.~Jarvis, \emph{Euclidean monopoles and rational maps}, Proc. London Math.
  Soc. (3) \textbf{77} (1998), no.~1, 170--192. \MR{1625475 (99h:58024)}

\bibitem[Kro85]{Kronheimer-msc}
P.~B. Kronheimer, \emph{Monopoles and {T}aub-{N}{U}{T} metrics}, Master's
  thesis, Oxford, 1985.

\bibitem[Kro89]{Kr}
\bysame, \emph{The construction of {ALE} spaces as hyper-{K}\"ahler quotients},
  J. Differential Geom. \textbf{29} (1989), no.~3, 665--683. \MR{MR992334
  (90d:53055)}

\bibitem[Lus90]{Lu-crystal}
G.~Lusztig, \emph{Canonical bases arising from quantized enveloping algebras.
  {II}}, Progr. Theoret. Phys. Suppl. (1990), no.~102, 175--201 (1991), Common
  trends in mathematics and quantum field theories (Kyoto, 1990). \MR{MR1182165
  (93g:17019)}

\bibitem[Nak94]{Na-quiver}
H.~Nakajima, \emph{Instantons on {ALE} spaces, quiver varieties, and
  {K}ac-{M}oody algebras}, Duke Math. J. \textbf{76} (1994), no.~2, 365--416.
  \MR{MR1302318 (95i:53051)}

\bibitem[Nak01]{Na-qaff}
\bysame, \emph{Quiver varieties and finite-dimensional representations of
  quantum affine algebras}, J. Amer. Math. Soc. \textbf{14} (2001), no.~1,
  145--238 (electronic). \MR{MR1808477 (2002i:17023)}

\bibitem[Nak09]{Na-branching}
\bysame, \emph{Quiver varieties and branching}, SIGMA Symmetry Integrability
  Geom. Methods Appl. \textbf{5} (2009), Paper 003, 37. \MR{2470410
  (2010f:17034)}

\bibitem[Nak15]{2015arXiv150303676N}
\bysame, \emph{{Towards a mathematical definition of Coulomb branches of
  $3$-dimensional $\mathcal N=4$ gauge theories, I}}, ArXiv e-prints (2015),
  \href{http://arxiv.org/abs/1503.03676}{{\ttfamily arXiv:1503.03676
  [math-ph]}}.

\bibitem[NP12]{2012arXiv1211.2240N}
N.~{Nekrasov} and V.~{Pestun}, \emph{{Seiberg-Witten geometry of four
  dimensional N=2 quiver gauge theories}}, ArXiv e-prints (2012),
  \href{http://arxiv.org/abs/1211.2240}{{\ttfamily arXiv:1211.2240 [hep-th]}}.

\bibitem[Pro14]{MR3238150}
N.~Proudfoot, \emph{Hypertoric {P}oisson homology in degree zero}, Algebr.
  Geom. \textbf{1} (2014), no.~3, 261--270. \MR{3238150}

\bibitem[PTVV13]{MR3090262}
T.~Pantev, B.~To{\"e}n, M.~Vaqui{\'e}, and G.~Vezzosi, \emph{Shifted symplectic
  structures}, Publ. Math. Inst. Hautes \'Etudes Sci. \textbf{117} (2013),
  271--328. \MR{3090262}

\bibitem[SL91]{MR1127479}
R.~Sjamaar and E.~Lerman, \emph{Stratified symplectic spaces and reduction},
  Ann. of Math. (2) \textbf{134} (1991), no.~2, 375--422. \MR{1127479
  (92g:58036)}

\bibitem[SW97]{MR1490862}
N.~Seiberg and E.~Witten, \emph{Gauge dynamics and compactification to three
  dimensions}, The mathematical beauty of physics ({S}aclay, 1996), Adv. Ser.
  Math. Phys., vol.~24, World Sci. Publ., River Edge, NJ, 1997, pp.~333--366.
  \MR{1490862 (99b:81247)}

\bibitem[Tak15]{Takayama}
Y.~Takayama, \emph{Nahm's equations, quiver varieties and paraboic sheaves},
  Publ. Res. Inst. Math. Sci. (2015), to appear.

\bibitem[Ton99]{MR1677752}
D.~Tong, \emph{Three-dimensional gauge theories and {$ADE$} monopoles}, Phys.
  Lett. B \textbf{448} (1999), no.~1-2, 33--36. \MR{1677752 (2000e:81190)}

\end{thebibliography}

\end{document}